\newcommand{\EE}{\mathbb{E}}
\theoremstyle{plain}
\newtheorem{theorem}{Theorem}[section]
\newtheorem*{theorem*}{Theorem}
\newtheorem{proposition}[theorem]{Proposition}
\theoremstyle{definition}
\newtheorem{definition}{Definition}[section]
\theoremstyle{remark}
\newtheorem*{remark}{Remark}
\title{Single Nugget Kriging} 
\author{Minyong R.\ Lee\footnotemark[2] \and 
Art B.\ Owen\footnotemark[2]}
\begin{document}
\maketitle
\newcommand{\slugmaster}{
\slugger{juq}{xxxx}{xx}{x}{x--x}}

\renewcommand{\thefootnote}{\fnsymbol{footnote}}

\footnotetext[2]{Department of Statistics, Stanford University, Stanford, CA 94305. }

\renewcommand{\thefootnote}{\arabic{footnote}}

\begin{abstract}
We propose a method with better predictions at extreme values than the standard method of Kriging. We construct our predictor in two ways: by penalizing the mean squared error through conditional bias and by penalizing the conditional likelihood at the target function value. Our prediction exhibits robustness to the model mismatch in the covariance parameters, a desirable feature for computer simulations with a restricted number of data points. Applications on several functions show that our predictor is robust to the non-Gaussianity of the function. 
\end{abstract}

\pagestyle{myheadings}
\thispagestyle{plain}
\markboth{MINYONG R.\ LEE AND ART B.\ OWEN}{SINGLE NUGGET KRIGING}

\section{Introduction} \label{sec:intro}
In many fields of engineering and science, computer experiments have become an essential tool in studying physical processes such as the subsurface of the earth, aerodynamic forces on bridge decks, and channel network flow. These experiments can be thought of as functions: given a set of input variables in a fixed domain the computer experiment returns the output, which can be a single value, a vector, or even a function. These experiments are usually deterministic, that is if we run the experiment with the same set of input variables, the output is identical. For more discussions of problems and examples in computer experiments, see Sacks et al. \cite{sacks1989design} and Koehler and Owen \cite{koehler1996computer}.

Kriging is a popular way to build metamodels in computer experiments. The method was initially proposed by D.G. Krige \cite{kbiob1951statistical}, and improved by G. Matheron \cite{matheron1963principles}. Kriging exactly interpolates the experimental data and produces predictions at unobserved inputs. The method also generates credible intervals which represent the uncertainty of the prediction. Stein \cite{stein1999interpolation} and Switzer \cite{switzer2006kriging} give summaries and in-depth discussions of Kriging.

However, there are several limitations of Kriging. First of all, the Kriging prediction depends on the covariance hyperparameters that are usually unknown and need to be estimated. The variability of the predicted process highly depends on the hyperparameters, and the likelihood of the hyperparameters are usually computationally expensive to compute and could have many local maxima. There have been several approaches to stabilize the estimation of the hyperparameters, such as Covariance Tapering by Kaufman et al. \cite{kaufman2008covariance} and Penalized Kriging by Li and Sudjianto \cite{li2005analysis}. We would like to find a predictor that is less affected by the hyperparameters.

Secondly, the Kriging prediction depends on the \textit{mean function} that we need to specify before looking at the data. In Kriging, there is a ``regression effect'', in which the predictions are pulled towards the mean function. This comes from minimizing the overall mean squared prediction error, and may give bad predictions at extreme function values. Conditional Bias-Penalized Kriging (CBPK) by Seo \cite{seo2013conditional} suggests minimizing the mean squared error plus the squared conditional bias to improve the performance at the extreme values. Furthermore, if there is a model mismatch, for instance if the mean function is assumed to be zero but actually it is a linear combination of input values, the predictions can be poor. Limit Kriging by Joseph \cite{joseph2006limit} and Blind Kriging by Joseph \cite{joseph2008blind} mitigate this problem. 
 
In this paper, we propose a new prediction method which we call Single Nugget Kriging (SiNK). In section \ref{sec:kriging}, we briefly introduce Kriging. In section \ref{sec:condlik}, we discuss conditioning the likelihood at the target, a fundamental idea of the SiNK. In section \ref{sec:sink}, we define SiNK, and show that it gives smaller mean squared prediction error than usual Kriging when the function value is far from the mean function. In other words, SiNK is robust to misspecifying the mean function or covariance hyperparameters. In section \ref{sec:numexp}, we compare the performance of SiNK to the performance of usual Kriging and Limit Kriging in several numerical experiments.

\section{Kriging} \label{sec:kriging}

 Kriging, or Gaussian Process Regression, treats the deterministic function $f(\mathbf{x})$ as a realization of a one-dimensional random field 
\begin{equation*}
Y(\mathbf{x}) = m(\mathbf{x}) + Z(\mathbf{x})
\end{equation*}
where $\mathbf{x} \in \mathbb{R}^d$, $m(\mathbf{x})$ is a deterministic mean function, and $Z(\mathbf{x})$ is a stationary Gaussian process with mean zero and covariance function $K(\cdot, \cdot)$.

There are three widely used Kriging models based on the mean function. When the mean function is a known function, it is called Simple Kriging, and when the function is an unknown constant $\beta$, it is called Ordinary Kriging. When the mean function is a linear combination of known functions $f_0,\ldots,f_p$ but coefficients $\beta_0,\ldots,\beta_p$ are unknown, namely $m(\mathbf{x}) = \sum_{k=0}^p \beta_k f_k (\mathbf{x})$, it is called Universal Kriging. 

For the covariance function, stationary covariance functions that are tensor products of one-dimensional kernels are popular. Let $C_{\theta}: \mathbb{R} \rightarrow [-1, 1]$ be a covariance kernel with length-scale parameter $\theta$. Let
\begin{equation*}
K(\mathbf{x}, \mathbf{y} )=\sigma^2 C(\mathbf{h}) = \sigma^2 \prod_{j=1}^d C_{\theta_j} (|h_j|) = \sigma^2 \prod_{j=1}^d C_{1} \left(\frac{|h_j|}{\theta_j} \right) 
\end{equation*} 
where $\mathbf{h} = \mathbf{x}- \mathbf{y}$ and $\sigma^2$ and $(\theta_1,\dots,\theta_d)$ are estimated from the data. Mat\'ern covariance kernels \cite{matern1986spatial} are defined as
\[
C_{\nu,\theta}(d) = \frac{(\sqrt{2\nu} \frac{d}{\theta})^{\nu}}{\Gamma(\nu) 2^{\nu-1}}  K_{\nu} \left(\sqrt{2\nu} \frac{d}{\theta} \right)
\]
where $K_\nu (\cdot)$ is the modified Bessel function of the second kind. Mat\'ern covariance kernels are one of the most commonly used kernels in practice because the smoothness of its process, defined in terms of its mean square differentiability, can be parametrized through $\nu$. 

For high dimensional functions, isotropic covariances 
\begin{equation*}
K(\mathbf{x}, \mathbf{y} ) = \sigma^2 C_\theta(\|\mathbf{h}\|) = \sigma^2 C_1\left(\frac{\|\mathbf{h}\|}{\theta}\right)
\end{equation*} 
are often used, where $\|\cdot\|$ is the Euclidean norm. If there is a measurement error or noise in the function, then adding a nugget effect handles the discontinuity in the function, namely
\begin{equation*}
K(\mathbf{x}, \mathbf{y} )=\sigma^2 C(\mathbf{h}) = \sigma^2 \prod_{j=1}^d C_{\theta_j} (|h_j|) + \tau^2 \mathbb{I}_{0}(\mathbf{h})
\end{equation*}  
where $\tau^2 >0$ is a parameter and $\mathbb{I}_{0}$ is the indicator function of the set $\{0 \} \subset \mathbb{R}^d$. 

Throughout the paper, we only consider deterministic computer experiments and we will use the model with a known (or estimated) constant mean $\beta$ for simplicity. The simplification of the mean function to a constant does not affect predictive performance in general; see Sacks et al. \cite{sacks1989design}. We assume that the hyperparameters of the covariance function are known (or estimated from the data), and we will focus on the prediction at a new point $\mathbf{x}_0$.
   
Now suppose we observe $\mathbf{y} = (Y(\mathbf{x}_1),\ldots,Y(\mathbf{x}_n))$, and let $K = (K_{ij})$ be the $n \times n$ covariance matrix of $\mathbf{y}$, $k(\mathbf{x}_0,\mathbf{x}_0)$ be the variance of $Y(\mathbf{x}_0)$, and $\mathbf{k}(\mathbf{x}_0)$ be the covariance vector between $\mathbf{y}$ and $Y(\mathbf{x}_0)$. In a matrix form,
 \begin{align*}
\mathrm{Var} \left[ \begin{pmatrix}
Y(\mathbf{x}_0) \\
\mathbf{y}
\end{pmatrix} \right] =  \begin{pmatrix}
k(\mathbf{x}_0,\mathbf{x}_0) & \mathbf{k}(\mathbf{x}_0)^T \\
\mathbf{k}(\mathbf{x}_0) & K
\end{pmatrix}.
\end{align*}
Let $\mathbf{1}$ be the $n$-length vector of all ones. Then, 
 \begin{align*}
Y(\mathbf{x}_0) \big|~ Y(X) = \mathbf{y} ~ \sim N(m, s^2)
\end{align*}
where
 \begin{align*}
&m = \beta +\mathbf{k}(\mathbf{x}_0)^T K^{-1} (\mathbf{y} - \beta\mathbf{1}), \; \mbox{and}\\
& s^2 = k(\mathbf{x}_0,\mathbf{x}_0) - \mathbf{k}(\mathbf{x}_0)^T K^{-1} \mathbf{k}({\mathbf{x}_0}).
\end{align*}
That is, the conditional distribution of $Y(\mathbf{x}_0)$
 given $\mathbf{y}$ is $N(m, s^2)$.
The Simple Kriging predictor is defined by the conditional mean
 \begin{align*}
\hat{Y}_\mathrm{K}(\mathbf{x}_0) = \EE[Y(\mathbf{x}_0) \big| \mathbf{y}]
= \beta +\mathbf{k}(\mathbf{x}_0)^T K^{-1} (\mathbf{y} - \beta\mathbf{1}).
\end{align*}

The Kriging predictor is also the Best Linear Unbiased Predictor(BLUP) that minimizes the mean squared prediction error (MSPE). Specifically, for Simple Kriging, the linear unbiased predictor $\hat{Y}(\mathbf{x}_0) = \beta + \lambda^T (\mathbf{y} - \beta\mathbf{1})$ that minimizes
 \begin{align*}
\EE[(Y(\mathbf{x}_0)-\hat{Y}(\mathbf{x}_0))^2]
\end{align*}
with respect to $\lambda$ is the Simple Kriging predictor.

\section{Conditional likelihood at the target and conditional bias} \label{sec:condlik}

In this section, we investigate the idea of maximizing the conditional likelihood given the target function value, which is the supporting idea of the SiNK. We also define a class of predictors by generalizing CBPK.

\subsection{Conditional likelihood at the target}

Let's formulate the prediction problem as an estimation problem. Instead of conditioning by the observed function values, we condition by the unknown function value at the target point and compute the likelihood. We easily find that
\begin{subequations}\label{eq:cl}
\begin{align} 
&Y(X) \big|~ Y(\mathbf{x}_0) = y_0  ~ \sim N(\tilde{m}, \tilde{K}), \; \mbox{where} \\
&\tilde{m} = \beta\mathbf{1} + k(\mathbf{x}_0,\mathbf{x}_0)^{-1}(y_0 - \beta)  \mathbf{k}(\mathbf{x}_0) \;\; \mbox{and}   \\
&\tilde{K} =  K - k(\mathbf{x}_0,\mathbf{x}_0)^{-1} \mathbf{k}(\mathbf{x}_0)  \mathbf{k}(\mathbf{x}_0)^T.
\end{align}
\end{subequations}
Now the conditional mean is a vector and the conditional variance is a matrix. The conditional log likelihood is 
\begin{align}\label{eq:condloglik}
l(y_0) &= -\frac{1}{2}(\mathbf{y} -\tilde{m} )^T \tilde{K}^{-1} (\mathbf{y} -\tilde{m} ) + \mbox{constant}.
\end{align}

Note that the maximizer of the conditional likelihood with respect to $y_0$ with penalty $ -(y_0 - \beta)^2/(2k(\mathbf{x}_0,\mathbf{x}_0))$, which is the \emph{maximum a posteriori} estimate of $y_0$ with the prior distribution \mbox{$y_0 \sim N(\beta,k(\mathbf{x}_0,\mathbf{x}_0))$}, is the Simple Kriging predictor. However, the maximizer of the conditional likelihood without penalty (CMLE) is 
\begin{align*}
\hat{Y}_\mathrm{CMLE}(\mathbf{x}_0) = \beta + \frac{k(\mathbf{x}_0,\mathbf{x}_0)}{ \mathbf{k}(\mathbf{x}_0)^T  K^{-1}  \mathbf{k}(\mathbf{x}_0) }   \mathbf{k}(\mathbf{x}_0)^T  K^{-1} (\mathbf{y} - \beta\mathbf{1}).
\end{align*}
The derivation is in the appendix, section \ref{appendix:cmle}. Let us define 
\begin{align*}
\rho = \rho(\mathbf{x}_0) = \sqrt{\frac{ \mathbf{k}(\mathbf{x}_0)^T  K^{-1}  \mathbf{k}(\mathbf{x}_0) }{k(\mathbf{x}_0,\mathbf{x}_0)}}.
\end{align*}
 Then $\rho(\mathbf{x}_0)^2$ is the variance explained by conditioning divided by the marginal variance of $y_0$. The quantity $\rho(\mathbf{x}_0)$ always lies in $[0,1]$, and can be understood as the correlation between the target function value and the data. The CMLE is obtained by inflating the residual term of the Simple Kriging predictor by $1/\rho(\mathbf{x}_0)^2$. 

\subsection{Conditional Bias}
The CMLE is also unbiased in the sense that
$\EE[\hat{Y}_\mathrm{CMLE}(\mathbf{x}_0)] = \beta$. In addition, $\hat{Y}_\mathrm{CMLE}(\mathbf{x}_0)$ is conditionally unbiased, namely
\begin{align*}
\EE[\hat{Y}_\mathrm{CMLE}(\mathbf{x}_0)\big| Y(\mathbf{x}_0)= y_0] = y_0.
\end{align*}
 However, for Simple Kriging, we have 
\begin{align*}
\EE[\hat{Y}_\mathrm{K}(\mathbf{x}_0)\big| Y(\mathbf{x}_0)= y_0] &= \beta +(y_0 - \beta) \frac{\mathbf{k}(\mathbf{x}_0)^T  K^{-1}\mathbf{k}(\mathbf{x}_0)}{k(\mathbf{x}_0,\mathbf{x}_0)} \\
 &= \beta + \rho(\mathbf{x}_0)^2(y_0 - \beta) \neq y_0
\end{align*}
so that $\hat{Y}_\mathrm{K}(\mathbf{x}_0)$ is conditionally biased. We can expect that for a given $y_0$ which is far from the prior mean, the performance of standard Kriging could be worse than the performance of CMLE.

\subsection{Conditional Bias-Penalized Kriging}
Conditional Bias-Penalized Kriging (CBPK) is defined as the linear unbiased predictor $\hat{Y}(\mathbf{x}_0) = \beta + \lambda^T (\mathbf{y} - \beta\mathbf{1})$ that minimizes the MSPE plus a multiple of squared conditional bias (CB) 
 \begin{align} \label{eq:cbpkobj}
\EE[(y_0 - \hat{Y}(\mathbf{x}_0))^2]+ \delta\EE[(y_0 - \EE[\hat{Y}(\mathbf{x}_0)\big| y_0])^2] \;\; (\mbox{for some } \delta \geq 0 )
\end{align}
with respect to $\lambda$.
Seo \cite{seo2013conditional} suggests that we use $\delta=1$, which leads to the predictor
 \begin{align*} 
\hat{Y}_{\mathrm{CBPK}}(\mathbf{x}_0) &= \beta + \frac{2 k(\mathbf{x}_0,\mathbf{x}_0)}{ k(\mathbf{x}_0,\mathbf{x}_0)+\mathbf{k}(\mathbf{x}_0)^T  K^{-1}\mathbf{k}(\mathbf{x}_0)}  \mathbf{k}(\mathbf{x}_0)^T  K^{-1} (\mathbf{y} - \beta \mathbf{1}) \\
&= \beta + \frac{2}{1+\rho(\mathbf{x}_0)^2}  \mathbf{k}(\mathbf{x}_0)^T  K^{-1} (\mathbf{y} - \beta \mathbf{1}).
\end{align*}

We observe that it is again a predictor with an inflated residual term. Different choices of $\delta$ in \eqref{eq:cbpkobj} will lead to different predictors. If $\delta = 0$, \eqref{eq:cbpkobj} is the objective for Simple Kriging, and thus the minimizer $\hat{Y}_{\mathrm{CBPK}}(\mathbf{x}_0) $ is the Simple Kriging predictor. If $\delta \rightarrow \infty$, the minimizing predictor is the CMLE. This matches with the fact that the CMLE is conditionally unbiased. 

The main question when using a CBPK is: which ratio between MSPE and CB should we use? We seek an automatic way to choose $\delta$ instead of simply using $\delta=1$ or applying a cross-validation-style approach. We suggest varying the ratio spatially, in other words, using an appropriate function of $\mathbf{x}_0$ as $\delta$ in the following section. For any nonnegative $\delta$, the generalized CBPK predictor for a constant mean model of the form
 \begin{align*}
\hat{Y}(\mathbf{x}_0) = \beta + w(\mathbf{x}_0)  \mathbf{k}(\mathbf{x}_0)^TK^{-1} (\mathbf{y} - \beta \mathbf{1})
\end{align*}
where $w(\mathbf{x}_0)  \in [1, 1/\rho(\mathbf{x}_0)^2]$. For every nonnegative $\delta$, there is a corresponding $w(\mathbf{x}_0)  \in [1, 1/\rho(\mathbf{x}_0)^2]$. See appendix section \ref{appendix:cbpk} for details.

\section{Single Nugget Kriging} \label{sec:sink}
In this section, we define the Single Nugget Kriging and discuss its properties. 
\subsection{Definition of SiNK} \label{subsec:defsink}
\begin{definition} \label{def:sink}
The Single Nugget Kriging (SiNK) predictor is defined as
\begin{align*}
\hat{Y}_\mathrm{SiNK}(\mathbf{x}_0) &= \beta + \frac{1}{\rho(\mathbf{x}_0)}\mathbf{k}(\mathbf{x}_0)^T K^{-1} (\mathbf{y} - \beta\mathbf{1})\\
&= \beta +\sqrt{\frac{k(\mathbf{x}_0,\mathbf{x}_0)}{\mathbf{k}(\mathbf{x}_0)^T  K^{-1}  \mathbf{k}(\mathbf{x}_0) } }\mathbf{k}(\mathbf{x}_0)^T K^{-1} (\mathbf{y} - \beta\mathbf{1})
\end{align*}
 which is the maximizer of the conditional likelihood given $Y(\mathbf{x}_0)=y_0$ with penalty
\begin{align*}
pen(y_0) =-  \frac{(y_0 - \beta)^2}{2k(\mathbf{x}_0, \mathbf{x}_0)}\frac{\rho(\mathbf{x}_0)}{(1+\rho(\mathbf{x}_0))}   .
\end{align*}
That is, the implicit prior distribution on $y_0$ is $y_0 \sim N(\beta,k(\mathbf{x}_0,\mathbf{x}_0) (1+1/\rho(\mathbf{x}_0)) $.
\end{definition}

SiNK is defined as the \emph{maximum a posteriori} estimator with a prior distribution on $Y(\mathbf{x}_0)$. We inflate the prior variance only at $\mathbf{x}_0$ by the amount of uncertainty measured by $\rho$, to reduce the dependency on the prior. It is equivalent to assuming an independent Gaussian noise only on $Y(\mathbf{x}_0)$, so we call the method Single Nugget Kriging. 

\textit{\begin{remark} \label{rmk:sink}
\normalfont
The SiNK predictor is the CBPK predictor with $\delta = 1/\rho(\mathbf{x}_0)$; it is the linear unbiased predictor $\hat{Y}(\mathbf{x}_0) = \beta + \lambda^T (\mathbf{y} - \beta\mathbf{1})$ where $\lambda$ is the solution of the optimization problem
 \begin{align*}
 \underset{\lambda}{\text{minimize}}\;\; \EE[(y_0 - \hat{Y}(\mathbf{x}_0))^2]+ \frac{1}{\rho(\mathbf{x}_0)}\EE[(y_0 - \EE[\hat{Y}(\mathbf{x}_0)\big| y_0])^2].
\end{align*}
\end{remark}}

Verifications of Definition \ref{def:sink} and Remark \ref{rmk:sink} are in appendix sections \ref{appendix:sink} and \ref{appendix:cbpk} respectively.
As mentioned in section \ref{sec:condlik}, the ratio $\delta$ is now a function of $\mathbf{x}_0$. The conditional bias penalty is larger when we have less information on the target function value. Penalizing by the conditional bias by an appropriate multiple of the conditional bias squared will improve performance at extreme values. The rationale of using $\delta = 1/\rho(\mathbf{x}_0)$ will be discussed in section \ref{subsec:sinkproperty}. 

\subsection{One-point case} \label{subsec:onepoint}
To illustrate the difference among the predictors, we consider the case when there is only one observation. Let $Y_0$ and $Y_1$ be two output values from a function. We observe $Y_1 = y_1$ and want to predict $Y_0$. The model in this case consists of 
\begin{align*}
\EE \left[\begin{pmatrix}Y_0 \\ Y_1 \end{pmatrix} \right] =\begin{pmatrix} \beta \\ \beta \end{pmatrix} \; \mbox{and}\; \mathrm{Var} \left[\begin{pmatrix}Y_0 \\ Y_1 \end{pmatrix}\right]= \sigma^2 \begin{pmatrix}
1 & \rho\\
\rho & 1
\end{pmatrix} 
\end{align*}
where $\rho >0$. The Simple Kriging predictor and the CMLE are
\begin{align}\label{eq:oneptcase}
\hat{Y}_{\mathrm{K}}  &= \beta + \rho (y_1 -\beta) \;\; \mbox{and} \notag \\
\hat{Y}_{\mathrm{CMLE}}  &= \beta + \frac{1}{\rho} (y_1 -\beta).
\end{align}
If we have $\rho$ close to zero, which is the case when we have little information on $Y_0$, then both predictors have problems. The Simple Kriging predictor will depend mostly on the prior mean $\beta$, and the CMLE predictor will have a large variance if the true function value is far from the prior mean. However, the SiNK predictor is
\begin{align*}
\hat{Y}_{\mathrm{SiNK}}  = \beta + \frac{\rho}{\rho} (y_1 - \beta) = y_1
\end{align*}
which does not depend on any parameters. If one wants to rely more on the data than the prior mean $\beta$, SiNK is preferable to Simple Kriging. Intuitively, not only when $n=1$ but also when $n>1$, SiNK will be more robust to the misspecified mean and covariance than usual Kriging. 

\subsection{Properties}\label{subsec:sinkproperty} The main feature of SiNK is its stability which will be represented as boundedness and localness in this section. The natural question that arises may be the uniqueness of a predictor with these properties. Theorem \ref{thm:localness} shows that the SiNK predictor is the unique predictor with both of these properties, in the class of generalized CBPK predictors with MSPE-CB ratio $\delta$ as a function of $\rho(\mathbf{x}_0)$. 

The following proposition shows that if the covariance function is stationary, then the SiNK predictor is bounded. This is not the case for the CMLE because it is unbounded as $\rho(\mathbf{x}_0)$ approaches 0. For instance, in the one-point case \eqref{eq:oneptcase}, $\hat{Y}_{\mathrm{CMLE}}$ diverges as $\rho \rightarrow 0$.

\begin{proposition}[Boundedness]
 \begin{align} \label{eq:bdd}
|\hat{Y}_\mathrm{SiNK}(\mathbf{x}_0)-\beta | \leq \sqrt{k(\mathbf{x}_0,\mathbf{x}_0)}\sqrt{ (\mathbf{y}-\beta\mathbf{1})^T  K^{-1} (\mathbf{y}-\beta\mathbf{1})  }
\end{align}
Thus, if the covariance function is stationary, then 
 \begin{align} \label{eq:bdd2}
\sup_{\mathbf{x}_0 \in \mathbb{R}^d} |\hat{Y}_\mathrm{SiNK}(\mathbf{x}_0)| < \infty.
\end{align}
\end{proposition}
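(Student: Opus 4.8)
The plan is to recognize the right-hand side of \eqref{eq:bdd} as a Cauchy--Schwarz bound in the inner product induced by $K^{-1}$, and then to use stationarity only at the very end to turn the data-dependent bound into a uniform one. First I would write the SiNK residual term explicitly: by Definition \ref{def:sink},
\begin{align*}
\hat{Y}_\mathrm{SiNK}(\mathbf{x}_0) - \beta = \frac{1}{\rho(\mathbf{x}_0)}\,\mathbf{k}(\mathbf{x}_0)^T K^{-1}(\mathbf{y}-\beta\mathbf{1}).
\end{align*}
Since $K$ is a (strictly) positive definite covariance matrix, $K^{-1}$ is positive definite, so $\langle \mathbf{u},\mathbf{v}\rangle := \mathbf{u}^T K^{-1}\mathbf{v}$ is a genuine inner product on $\mathbb{R}^n$.

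The key step is the Cauchy--Schwarz inequality in this inner product applied to $\mathbf{u}=\mathbf{k}(\mathbf{x}_0)$ and $\mathbf{v}=\mathbf{y}-\beta\mathbf{1}$:
\begin{align*}
\bigl|\mathbf{k}(\mathbf{x}_0)^T K^{-1}(\mathbf{y}-\beta\mathbf{1})\bigr| \le \sqrt{\mathbf{k}(\mathbf{x}_0)^T K^{-1}\mathbf{k}(\mathbf{x}_0)}\;\sqrt{(\mathbf{y}-\beta\mathbf{1})^T K^{-1}(\mathbf{y}-\beta\mathbf{1})}.
\end{align*}
Dividing by $\rho(\mathbf{x}_0)$ and substituting $\rho(\mathbf{x}_0)=\sqrt{\mathbf{k}(\mathbf{x}_0)^T K^{-1}\mathbf{k}(\mathbf{x}_0)/k(\mathbf{x}_0,\mathbf{x}_0)}$, the factor $\sqrt{\mathbf{k}(\mathbf{x}_0)^T K^{-1}\mathbf{k}(\mathbf{x}_0)}$ cancels and leaves exactly $\sqrt{k(\mathbf{x}_0,\mathbf{x}_0)}\sqrt{(\mathbf{y}-\beta\mathbf{1})^T K^{-1}(\mathbf{y}-\beta\mathbf{1})}$, which is \eqref{eq:bdd}. (One small bookkeeping point I would mention: if $\mathbf{k}(\mathbf{x}_0)=\mathbf{0}$ then $\rho(\mathbf{x}_0)=0$ and the ratio is a $0/0$ form; one defines $\hat{Y}_\mathrm{SiNK}$ at such a point by continuity to be $\beta$, and then \eqref{eq:bdd} holds trivially. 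Also $k(\mathbf{x}_0,\mathbf{x}_0)>0$ throughout, since otherwise $\rho$ is undefined.)

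For \eqref{eq:bdd2}, I would note that stationarity of the covariance means $k(\mathbf{x}_0,\mathbf{x}_0)=\sigma^2 C(\mathbf{0})=\sigma^2$ does not depend on $\mathbf{x}_0$; hence the right-hand side of \eqref{eq:bdd} is a constant independent of $\mathbf{x}_0$. Combining with the triangle inequality $|\hat{Y}_\mathrm{SiNK}(\mathbf{x}_0)| \le |\beta| + |\hat{Y}_\mathrm{SiNK}(\mathbf{x}_0)-\beta|$ gives
\begin{align*}
\sup_{\mathbf{x}_0\in\mathbb{R}^d}|\hat{Y}_\mathrm{SiNK}(\mathbf{x}_0)| \le |\beta| + \sigma\sqrt{(\mathbf{y}-\beta\mathbf{1})^T K^{-1}(\mathbf{y}-\beta\mathbf{1})} < \infty,
\end{align*}
which is \eqref{eq:bdd2}. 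There is no real obstacle here; the only things requiring a word of care are the positive-definiteness of $K^{-1}$ that legitimizes the Cauchy--Schwarz step and the harmless $\mathbf{k}(\mathbf{x}_0)=\mathbf{0}$ degeneracy, so most of the ``work'' is just recording these observations and the one-line cancellation.
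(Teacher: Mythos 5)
Your proof is correct and follows essentially the same route as the paper's: the Cauchy--Schwarz inequality in the $K^{-1}$ inner product, cancellation of $\sqrt{\mathbf{k}(\mathbf{x}_0)^T K^{-1}\mathbf{k}(\mathbf{x}_0)}$ against $\rho(\mathbf{x}_0)$, and stationarity to make the bound uniform in $\mathbf{x}_0$. The extra remarks on the $\mathbf{k}(\mathbf{x}_0)=\mathbf{0}$ degeneracy and positive definiteness of $K^{-1}$ are sensible bookkeeping the paper leaves implicit.
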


\begin{proof}
By the Cauchy-Schwartz inequality,
 \begin{align*}
|\hat{Y}_\mathrm{SiNK}(\mathbf{x}_0)-\beta | &= \frac{1}{\rho(\mathbf{x}_0)}|\mathbf{k}(\mathbf{x}_0)^T K^{-1}(\mathbf{y}-\beta\mathbf{1})| \\
&\leq \frac{1}{\rho(\mathbf{x}_0)} \sqrt{ \mathbf{k}(\mathbf{x}_0)^T K^{-1}\mathbf{k}(\mathbf{x}_0)}  \sqrt{ (\mathbf{y}-\beta\mathbf{1})^T  K^{-1} (\mathbf{y}-\beta\mathbf{1}) } \\
 &= \sqrt{k(\mathbf{x}_0,\mathbf{x}_0)}\sqrt{ (\mathbf{y}-\beta\mathbf{1})^T  K^{-1} (\mathbf{y}-\beta\mathbf{1})  }
\end{align*}
and equality holds when $K^{-1/2}\mathbf{k}(\mathbf{x}_0)$ and $K^{-1/2}(\mathbf{y}-\beta\mathbf{1})$ are parallel. If the covariance function is stationary, then the right hand side of  \eqref{eq:bdd} does not depend on $\mathbf{x}_0$, thus \eqref{eq:bdd2} holds. \qquad
\end{proof}

 For a predictor with inflated residual of Simple Kriging predictor to be bounded, the maximum amount of inflation is order of $1/\rho(\mathbf{x}_0)$. Roughly speaking, SiNK is the predictor with maximum inflation of the residual term that satisfies boundedness. 

Now let $J_k$ be a set of points that have different distances from observations in $k$'th coordinate, namely
 \begin{align} \label{eq:jk}
J_k := \{\mathbf{x}_0 ~\big| ~|(\mathbf{x}_0 - \mathbf{x}_j)_k| \neq |(\mathbf{x}_0 - \mathbf{x}_l)_k|~ \mbox{for all} ~j\neq l, j,l \in \{1,2,\ldots,n \} \} 
\end{align}
where $k \in \{1,2,\cdots, d\}$. In Proposition \ref{property:localness} and Theorem \ref{thm:localness}, we assume that the new point $\mathbf{x}_0$ is in $J_k$ to break the ties; we remove a measure zero set to simplify the argument. Also, let us define the neighborhood of an observation $\mathbf{x}_j$ for $j \in \{1,2,\ldots,n \}$ as
 \begin{align} \label{eq:bj}
B (\mathbf{x}_j) := \{\mathbf{x}_0 ~\big| ~K(\mathbf{x}_0,\mathbf{x}_j)> K(\mathbf{x}_0, \mathbf{x}_l)~ \forall l \neq j, l \in \{1,2,\ldots,n \}  \} .
\end{align}
 That is, if $\mathbf{x}_0 \in B (\mathbf{x}_j)$, then $\mathbf{x}_j$ is the closest observation to $\mathbf{x}_0$ in terms of covariance.
\begin{proposition}[Localness] \label{property:localness}
Suppose that the covariance function is a tensor product of stationary kernels with length scale parameter $\theta = (\theta_1,\ldots,\theta_d)$. Then 
      \begin{align*}
 \lim_{\theta_k \rightarrow 0} \sup_{\mathbf{x}_0 \in B(\mathbf{x}_j) \cap J_k } |\hat{Y}(\mathbf{x}_0)- Y(\mathbf{x}_j)| = 0
\end{align*}
where $J_k$ and $B (\mathbf{x}_j)$ are sets of points defined in \eqref{eq:jk} and \eqref{eq:bj} respectively.
\end{proposition}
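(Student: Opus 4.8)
The plan is to show that as $\theta_k\to0$ the SiNK predictor at every $\mathbf{x}_0\in B(\mathbf{x}_j)\cap J_k$ is driven onto the observed value $Y(\mathbf{x}_j)=y_j$, with all the estimates uniform in $\mathbf{x}_0$. I would first normalize the one-dimensional kernels so that $C_1(0)=1$; stationarity then gives $k(\mathbf{x}_0,\mathbf{x}_0)\equiv\sigma^2$, and with $\mathbf{z}=\mathbf{y}-\beta\mathbf{1}$,
\[
\hat{Y}_{\mathrm{SiNK}}(\mathbf{x}_0)-\beta=\sigma\,\frac{\mathbf{k}(\mathbf{x}_0)^{T}K^{-1}\mathbf{z}}{\sqrt{\mathbf{k}(\mathbf{x}_0)^{T}K^{-1}\mathbf{k}(\mathbf{x}_0)}}.
\]
The crucial structural observation is that the right-hand side is invariant under multiplication of $\mathbf{k}(\mathbf{x}_0)$ by a positive scalar, so only the \emph{direction} of $\mathbf{k}(\mathbf{x}_0)$ matters. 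Since $\mathbf{x}_0\in B(\mathbf{x}_j)$, the entry $K(\mathbf{x}_0,\mathbf{x}_j)$ is the largest coordinate of $\mathbf{k}(\mathbf{x}_0)$, so I factor $\mathbf{k}(\mathbf{x}_0)=K(\mathbf{x}_0,\mathbf{x}_j)\,\mathbf{u}$ with $u_j=1$ and $u_l=K(\mathbf{x}_0,\mathbf{x}_l)/K(\mathbf{x}_0,\mathbf{x}_j)\in[0,1)$ for $l\neq j$; the scalar cancels, leaving $\hat{Y}_{\mathrm{SiNK}}(\mathbf{x}_0)-\beta=\sigma\,\mathbf{u}^{T}K^{-1}\mathbf{z}\big/\sqrt{\mathbf{u}^{T}K^{-1}\mathbf{u}}$.

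Next I would identify the two limiting objects. The set $J_k$ is nonempty only when the design points have pairwise distinct $k$th coordinates, so (for the usual families, with $C_1$ positive, strictly decreasing, and vanishing at infinity) for $i\neq l$ the entry $K_{il}=\sigma^{2}C_{\theta_k}(|(\mathbf{x}_i-\mathbf{x}_l)_k|)\prod_{m\neq k}C_{\theta_m}(|(\mathbf{x}_i-\mathbf{x}_l)_m|)\to0$, i.e.\ $K\to\sigma^{2}I$, a convergence not involving $\mathbf{x}_0$. The heart of the argument is the claim that $\mathbf{u}\to\mathbf{e}_j$, the $j$th coordinate vector, uniformly over $\mathbf{x}_0\in B(\mathbf{x}_j)\cap J_k$. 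Writing $\rho_i=|(\mathbf{x}_0-\mathbf{x}_i)_k|$ and $Q_l=\prod_{m\neq k}C_{\theta_m}(|(\mathbf{x}_0-\mathbf{x}_l)_m|)\big/C_{\theta_m}(|(\mathbf{x}_0-\mathbf{x}_j)_m|)$ (which does not depend on $\theta_k$), one has $u_l=\bigl(C_{\theta_k}(\rho_l)/C_{\theta_k}(\rho_j)\bigr)Q_l$; the constraint $u_l<1$ from $\mathbf{x}_0\in B(\mathbf{x}_j)$ together with the monotone decay of $C_1$ forces $\rho_j<\rho_l$ once $\theta_k$ is small, after which $u_l\to0$ because $C_{\theta_k}(\rho_l)/C_{\theta_k}(\rho_j)=C_1(\rho_l/\theta_k)/C_1(\rho_j/\theta_k)\to0$. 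To make this uniform I would work on the compact design domain, on which each $Q_l$ is bounded away from $0$ and $\infty$ and the $\rho_i$ are bounded, and replace mere monotonicity of $C_1$ by a quantitative decay estimate.

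With $\mathbf{u}\to\mathbf{e}_j$ and $K\to\sigma^{2}I$ both uniform, I would close the argument by continuity: on the compact set $\{\mathbf{u}:u_j=1,\ 0\le u_l\le1\}$ the map $(\mathbf{u},K)\mapsto\sigma\,\mathbf{u}^{T}K^{-1}\mathbf{z}/\sqrt{\mathbf{u}^{T}K^{-1}\mathbf{u}}$ is continuous at $(\mathbf{e}_j,\sigma^{2}I)$, where it equals $\sigma\cdot(\sigma^{-2}z_j)/\sigma^{-1}=z_j=y_j-\beta$, so substituting the two uniform limits yields $\sup_{\mathbf{x}_0\in B(\mathbf{x}_j)\cap J_k}|\hat{Y}_{\mathrm{SiNK}}(\mathbf{x}_0)-y_j|\to0$, which is the assertion since $Y(\mathbf{x}_j)=y_j$. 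The step I expect to be the main obstacle is precisely the uniformity of $\mathbf{u}\to\mathbf{e}_j$: pointwise convergence is immediate, but near the shared boundary of $B(\mathbf{x}_j)$ and a neighbouring cell $B(\mathbf{x}_l)$, where $\rho_j$ and $\rho_l$ are nearly equal, $u_l$ can remain bounded away from $0$ at moderate $\theta_k$, so one needs genuine quantitative control of how fast $C_{\theta_k}(\rho_l)/C_{\theta_k}(\rho_j)$ collapses as a function of $\rho_l-\rho_j$. That is where the precise tail behaviour of $C_1$ must enter, and where I would expect to spend most of the effort.
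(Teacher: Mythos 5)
Your argument is essentially the paper's own proof: the paper likewise rescales $\mathbf{k}(\mathbf{x}_0)$ by its dominant entry $K(\mathbf{x}_0,\mathbf{x}_j)$, shows the normalized covariance vector tends to $\mathbf{e}_j$ and $K/\sigma^2\to I_n$, and uses the scale invariance of the SiNK predictor (equivalently, the cancellation $w(\rho)\rho=1$ for $w=1/\rho$) so that the vanishing factor $K(\mathbf{x}_0,\mathbf{x}_j)$ drops out. At the level of a fixed $\mathbf{x}_0$ the two arguments coincide.

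Where you go beyond the paper is in worrying about uniformity of $\mathbf{u}\to\mathbf{e}_j$ over $\mathbf{x}_0\in B(\mathbf{x}_j)\cap J_k$, and you are right to worry --- the paper's proof, like yours, only establishes the pointwise limit. But the repair you sketch (a quantitative tail estimate on $C_1$) cannot close this gap, because the obstruction is not the decay rate of $C_1$. For any fixed $\theta_k>0$, points $\mathbf{x}_0$ arbitrarily close to the common boundary of $B(\mathbf{x}_j)$ and $B(\mathbf{x}_l)$ still lie in $B(\mathbf{x}_j)\cap J_k$, and there $u_l=K(\mathbf{x}_0,\mathbf{x}_l)/K(\mathbf{x}_0,\mathbf{x}_j)$ is arbitrarily close to $1$ by continuity, no matter how fast $C_1$ decays; at such points $\hat{Y}_{\mathrm{SiNK}}(\mathbf{x}_0)-\beta\approx(z_j+z_l)/\sqrt{2}$, which differs from $z_j$ for generic data. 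Concretely, with $d=1$, $n=2$, $\mathbf{x}_1=0$, $\mathbf{x}_2=1$ and $\mathbf{x}_0=1/2-\eta$, the supremum over $\eta\in(0,1/2)$ of $|\hat{Y}_{\mathrm{SiNK}}(\mathbf{x}_0)-y_1|$ stays bounded below by roughly $\bigl|(z_1+z_2)/\sqrt{2}-z_1\bigr|$ for every $\theta$. So what your argument (and the paper's) actually delivers is the pointwise statement, or the supremum over any subset of $B(\mathbf{x}_j)\cap J_k$ bounded away from the cell boundaries; the supremum over all of $B(\mathbf{x}_j)\cap J_k$ as written does not tend to zero. State that restriction rather than spending effort on tail estimates, which will not help here.
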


 Proposition \ref{property:localness} shows that as $\theta_k \rightarrow 0$, if $\mathbf{x}_j$ is the closest observation (in $k$'th coordinate) to $\mathbf{x}_0$, then the SiNK predictor $\hat{Y}(\mathbf{x}_0)$ converges to ${Y}(\mathbf{x}_j)$. In the following theorem, we show that the SiNK predictor is the only predictor that satisfies localness, in the class of generalized CBPK predictors. Note that as $\theta_k \rightarrow 0$, the Simple Kriging predictor converges to the prior mean $\beta$. 

\begin{theorem}[Uniqueness] \label{thm:localness}
Consider a conditional biased penalized kriging predictor
 \begin{align*}
\hat{Y}(\mathbf{x}_0) = \beta + w(\mathbf{x}_0)  \mathbf{k}(\mathbf{x}_0)^T K^{-1} (\mathbf{y} - \beta \mathbf{1})
\end{align*}
such that the covariance function is a tensor product of stationary kernels with length scale parameter $\theta = (\theta_1,\ldots,\theta_d)$, and $w(\mathbf{x}_0) \in [1, 1/\rho(\mathbf{x}_0)^2]$ is a continuous function of $\rho(\mathbf{x}_0)$. Suppose that $\mathbf{x}_1,\ldots,\mathbf{x}_n \in J_k$ \eqref{eq:jk}. If there exists a $k \in \{ 1,2,\ldots,d \}$ such that
 \begin{align} \label{eq:localinlimit}
 \lim_{\theta_k \rightarrow 0} \sup_{\mathbf{x}_0 \in B(\mathbf{x}_j) \cap J_k } |\hat{Y}(\mathbf{x}_0)- Y(\mathbf{x}_j)| = 0
 \end{align}
holds where $J_k$ and $B (\mathbf{x}_j)$ are sets of points defined in \eqref{eq:jk} and \eqref{eq:bj} respectively, then $w(\rho(\mathbf{x}_0)) =1/\rho(\mathbf{x}_0)$, i.e. $\hat{Y}(\mathbf{x}_0)$ is the SiNK predictor. 
\end{theorem}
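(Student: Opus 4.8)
The plan is to recast the candidate predictor in terms of $\hat{Y}_{\mathrm{SiNK}}$ and then use Proposition~\ref{property:localness} to force the multiplier. Write $r(\mathbf{x}_0)=\mathbf{k}(\mathbf{x}_0)^T K^{-1}(\mathbf{y}-\beta\mathbf{1})$, so that $\hat{Y}_{\mathrm{SiNK}}(\mathbf{x}_0)-\beta = r(\mathbf{x}_0)/\rho(\mathbf{x}_0)$ and, setting $g(\rho):=w(\rho)\rho$ (continuous, with $g(\rho)\in[\rho,1/\rho]$ since $w(\rho)\in[1,1/\rho^2]$),
\begin{align*}
\hat{Y}(\mathbf{x}_0)-Y(\mathbf{x}_j) = \bigl(g(\rho(\mathbf{x}_0))-1\bigr)\bigl(Y(\mathbf{x}_j)-\beta\bigr) + g(\rho(\mathbf{x}_0))\bigl(\hat{Y}_{\mathrm{SiNK}}(\mathbf{x}_0)-Y(\mathbf{x}_j)\bigr).
\end{align*}
The goal reduces to showing $g\equiv1$ on $(0,1]$, equivalently $w(\rho(\mathbf{x}_0))=1/\rho(\mathbf{x}_0)$. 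The hypothesis \eqref{eq:localinlimit} is posited for $B(\mathbf{x}_j)$ for each $j$, so we are free to fix an index $j$ with $Y(\mathbf{x}_j)\neq\beta$; this is possible unless $\mathbf{y}=\beta\mathbf{1}$, a degenerate case treated separately below. By Proposition~\ref{property:localness}, $\varepsilon_{\mathrm{SiNK}}(\theta_k):=\sup_{\mathbf{x}_0\in B(\mathbf{x}_j)\cap J_k}|\hat{Y}_{\mathrm{SiNK}}(\mathbf{x}_0)-Y(\mathbf{x}_j)|\to0$ as $\theta_k\to0$, and by hypothesis the analogous quantity $\varepsilon(\theta_k)$ for $\hat{Y}$ also tends to $0$.

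Next I would localize. Fix $\rho_0\in(0,1)$ and put $S(\theta_k):=\{\mathbf{x}_0\in B(\mathbf{x}_j)\cap J_k:\rho(\mathbf{x}_0)\ge\rho_0\}$. On $S(\theta_k)$ we have $g(\rho(\mathbf{x}_0))\le 1/\rho(\mathbf{x}_0)\le1/\rho_0$, so the displayed identity yields, for every $\mathbf{x}_0\in S(\theta_k)$,
\begin{align*}
|Y(\mathbf{x}_j)-\beta|\,\bigl|g(\rho(\mathbf{x}_0))-1\bigr| \;\le\; |\hat{Y}(\mathbf{x}_0)-Y(\mathbf{x}_j)| + \tfrac{1}{\rho_0}\,|\hat{Y}_{\mathrm{SiNK}}(\mathbf{x}_0)-Y(\mathbf{x}_j)| \;\le\; \varepsilon(\theta_k)+\tfrac{1}{\rho_0}\varepsilon_{\mathrm{SiNK}}(\theta_k).
\end{align*}
Taking the supremum over $\mathbf{x}_0\in S(\theta_k)$ gives $|Y(\mathbf{x}_j)-\beta|\cdot\sup_{\mathbf{x}_0\in S(\theta_k)}|g(\rho(\mathbf{x}_0))-1|\to0$ as $\theta_k\to0$. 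The restriction $\rho\ge\rho_0$ is essential here: it is what makes the coefficient $g(\rho(\mathbf{x}_0))$ in front of the (only pointwise, not rate-controlled) SiNK error bounded uniformly in $\mathbf{x}_0$.

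The crucial step is then to show that $\sup_{\mathbf{x}_0\in S(\theta_k)}|g(\rho(\mathbf{x}_0))-1|$ is independent of $\theta_k$ once $\theta_k$ is small, hence forced to vanish. For this I would take a short straight path $\gamma(t)=\mathbf{x}_j+t\mathbf{v}$, $t\in[0,t_0]$, with every coordinate of $\mathbf{v}$ nonzero and $t_0$ small. Since $K(\mathbf{x}_j,\mathbf{x}_l)\to0$ for $l\neq j$ (the observations have distinct $k$-th coordinates, which is what $\mathbf{x}_1,\dots,\mathbf{x}_n\in J_k$ forces via \eqref{eq:jk}), $B(\mathbf{x}_j)$ contains a fixed ball around $\mathbf{x}_j$ for all small $\theta_k$, so $\gamma([0,t_0])\subset B(\mathbf{x}_j)$ uniformly; moreover $\gamma(t)\in J_k$ for all but finitely many $t$. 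On this path $\rho(\gamma(0))=\rho(\mathbf{x}_j)=1$, while $\rho(\gamma(t_0))\to0$ as $\theta_k\to0$ because $\mathbf{k}(\gamma(t_0))\to0$ (its $k$-th kernel factors tend to $0$) while $K^{-1}$ stays bounded. Hence for $\theta_k$ small, continuity of $\rho\circ\gamma$ and the intermediate value theorem show $\rho$ attains a dense subset of $[\rho_0,1]$ on $S(\theta_k)$; since $g$ is continuous, $\sup_{\mathbf{x}_0\in S(\theta_k)}|g(\rho(\mathbf{x}_0))-1|=\sup_{\rho\in[\rho_0,1]}|g(\rho)-1|$, a $\theta_k$-free constant. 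Combining with the previous paragraph and $Y(\mathbf{x}_j)\neq\beta$ gives $g\equiv1$ on $[\rho_0,1]$; letting $\rho_0\downarrow0$ and using continuity of $g$ at $1$ yields $g\equiv1$ on $(0,1]$, i.e. $w(\rho(\mathbf{x}_0))=1/\rho(\mathbf{x}_0)$, so $\hat{Y}$ is the SiNK predictor.

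The main obstacle I expect is precisely this last step: converting a statement about a supremum over the $\theta_k$-dependent admissible region $B(\mathbf{x}_j)\cap J_k$ into a statement about the fixed continuous function $g$. Making it rigorous requires (i) checking that a probe path from $\mathbf{x}_j$ stays inside $B(\mathbf{x}_j)$ uniformly for small $\theta_k$ and meets $J_k$ off a finite set, and (ii) confirming that $\rho$ along it sweeps all of $[\rho_0,1]$ in the limit — both of which lean on the same estimates underlying Proposition~\ref{property:localness}. A minor but genuine caveat is the degenerate data vector $\mathbf{y}=\beta\mathbf{1}$: there every admissible predictor returns $\beta$ and $w$ cannot be recovered, so that event (which has probability zero under the Gaussian model) must be excluded from the statement.
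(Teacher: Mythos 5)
Your proof is correct, and it reaches the paper's conclusion by a structurally parallel but genuinely different route. The paper re-derives the localness limits from scratch: it factors $\hat{Y}(\mathbf{x}_0)-\beta$ as $w(\rho)\rho\cdot\frac{K(\mathbf{x}_0,\mathbf{x}_j)}{\rho\sigma^2}\cdot\frac{\sigma^2\mathbf{k}(\mathbf{x}_0)^TK^{-1}(\mathbf{y}-\beta\mathbf{1})}{K(\mathbf{x}_0,\mathbf{x}_j)}$, computes the limit of each factor as $\theta_k\to0$ (using $\mathbf{k}(\mathbf{x}_0)/K(\mathbf{x}_0,\mathbf{x}_j)\to\mathbf{e}_j$ and $K/\sigma^2\to I_n$) to reduce \eqref{eq:localinlimit} to $\lim w(\rho)\rho=1$, and then realizes each limiting value $\rho\to\delta\in(0,1]$ with the explicit shrinking probe point $\mathbf{x}_0=\mathbf{x}_j+C_1^{-1}(\delta)\theta_k\mathbf{e}_k$, so continuity of $w$ forces $w(\delta)\delta=1$ pointwise. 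You instead reuse Proposition~\ref{property:localness} as a black box via the decomposition $\hat{Y}-Y(\mathbf{x}_j)=(g-1)(Y(\mathbf{x}_j)-\beta)+g(\hat{Y}_{\mathrm{SiNK}}-Y(\mathbf{x}_j))$ with $g=w\rho$, and replace the explicit probe points by an intermediate-value sweep of $\rho$ along a fixed path from $\mathbf{x}_j$. Your step (1) is cleaner in that it avoids recomputing limits; your step (2) is heavier, since the density of $\rho(S(\theta_k))$ in $[\rho_0,1]$ and the uniform containment of the path in $B(\mathbf{x}_j)$ require exactly the kernel-ratio decay estimates the paper's probe-point construction delivers in one line — so you have not escaped that work, only repackaged it. Two of your side remarks are genuinely valuable: the requirement that some $y_j\neq\beta$ is also implicitly needed in the paper's argument (there, $\lim(w\rho)(y_j-\beta)=y_j-\beta$ pins down $\lim w\rho$ only when $y_j\neq\beta$) but is not acknowledged there; and reading \eqref{eq:localinlimit} as holding for every $j$ is the interpretation both proofs need. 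One small redundancy: once $g\equiv1$ on $[\rho_0,1]$ for every $\rho_0>0$ you are done on $(0,1]$ by taking the union; no separate continuity argument at $1$ is required.
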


The proof of Proposition \ref{property:localness} and Theorem \ref{thm:localness} is given in the appendix, section \ref{appendix:thm1}. Restricting $w(\mathbf{x}_0)$ to be a function of $\rho(\mathbf{x}_0)$ enables us to guarantee that $w(\mathbf{x}_0) \in [1, 1/\rho(\mathbf{x}_0)^2]$. For example, $w(\mathbf{x}_0) = 1/\rho(\mathbf{x}_0)$ is always in $[1, 1/\rho(\mathbf{x}_0)^2]$. Another example for necessity of this condition is Limit Kriging (Joseph \cite{joseph2006limit}) where the predictor has $w(\mathbf{x}_0) = 1/(\mathbf{k}(\mathbf{x}_0)^T K^{-1} \mathbf{1})$. The Limit Kriging predictor has the localness property, but is not guaranteed to be a CBPK with nonnegative ratio $\delta$, which means we cannot guarantee better performance at extreme values. 

Figure \ref{fig:zakharov} illustrates the property of SiNK and the difference to Ordinary Kriging. The function used in this figure is the 2-dimensional Zakharov function in $[0,1]^2$, which is
 \begin{align} \label{eq:zakharov}
f(\mathbf{x}) = \sum\limits_{i=1}^d x_i^2 + \bigg(\sum\limits_{i=1}^d0.5 ix_i\bigg)^2 + \bigg(\sum\limits_{i=1}^d0.5 ix_i \bigg)^4
\end{align}
where $d=2$, and the input points are 4 midpoints of the edges of a unit square. We fitted Ordinary Kriging and SiNK with an estimated constant mean and tensor product Mat\'ern $5/2$ covariance. For $\theta_1 = \theta_2 = 1$, the predictions are quite similar because $\rho(\mathbf{x}_0) \approx 1$ for all $\mathbf{x}_0 \in [0,1]^2$. However, when $\theta_1 = \theta_2$ are close to zero, we observe significant differences between the two predictions. We also observe the localness property of SiNK. The $\rho(\mathbf{x}_0)$ are close to zero for most of the plotted points, and thus the Ordinary Kriging predictor is close to the estimated constant mean for points far from the observations. The SiNK predictor uses the function value of the observation that is the closest to the target point. 

The localness property of SiNK is also related to the fact that the SiNK prediction at $\mathbf{x}_0$ only depends on the ratios of the correlations with observed function values. For instance, suppose that we predict at another point $\mathbf{x}_0'$ with covariance vector $\mathbf{k}(\mathbf{x}_0') = c  \mathbf{k}(\mathbf{x}_0)$, where $c$ is in $(0,1)$. Then
 \begin{align*}
\hat{Y}_{\mathrm{SiNK}}(\mathbf{x}_0') =\beta+ \frac{\mathbf{k}(\mathbf{x}_0')^TK^{-1}(\mathbf{y}-\beta\mathbf{1})}{\sqrt{\mathbf{k}(\mathbf{x}_0')^T K^{-1}\mathbf{k}(\mathbf{x}_0')}} = \hat{Y}_{\mathrm{SiNK}}(\mathbf{x}_0).
\end{align*}
Thus, the SiNK prediction at $\mathbf{x}_0'$ is the same as the prediction at $\mathbf{x}_0$. However, the Simple Kriging prediction is shrunk to $\beta$ by a factor of $c$. Thus, even if $\mathbf{x}_0'$ is far away from inputs, only the ratios of the correlation determine the SiNK prediction. In other words, SiNK does not automatically converge to the prior mean $\beta$ as $\mathbf{k}(\mathbf{x}_0) \rightarrow 0$, for instance if one of the $\theta_j \rightarrow 0$.

In practice, even though the prediction is theoretically well bounded, dividing by $\rho(\mathbf{x}_0)$ can be numerically unstable when $\rho$ is close to zero. A practical fix is to use
 \begin{align} \label{eq:eps}
\hat{Y}_\mathrm{SiNK, \epsilon}(\mathbf{x}_0)=\beta + \frac{1}{\max(\rho(\mathbf{x}_0), \epsilon)}  \mathbf{k}(\mathbf{x}_0)^T K^{-1} (\mathbf{y} - \beta\mathbf{1})
\end{align}
for a small $\epsilon$. We use $\epsilon = 10^{-3}$ in our numerical work. A larger $\epsilon$ would protect from bad estimators of length-scale parameters that we did not encounter in our numerical experiments.

\begin{figure}
\centering
\subfigure[$\theta = 1$.] {\label{fig:cov1}\includegraphics[width=60mm]{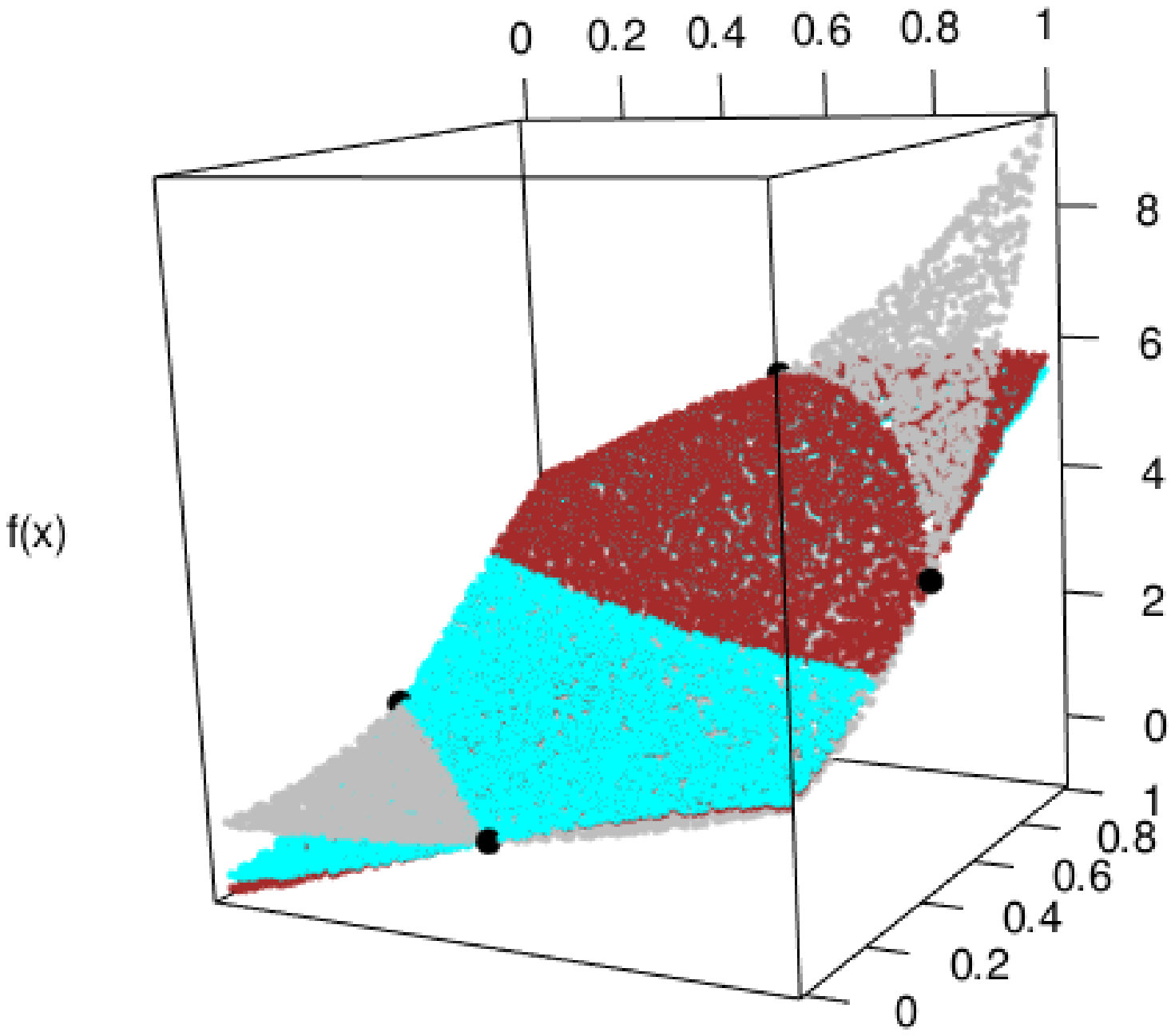}}
\subfigure[$\theta = 0.05$.] {\label{fig:cov005}\includegraphics[width=60mm]{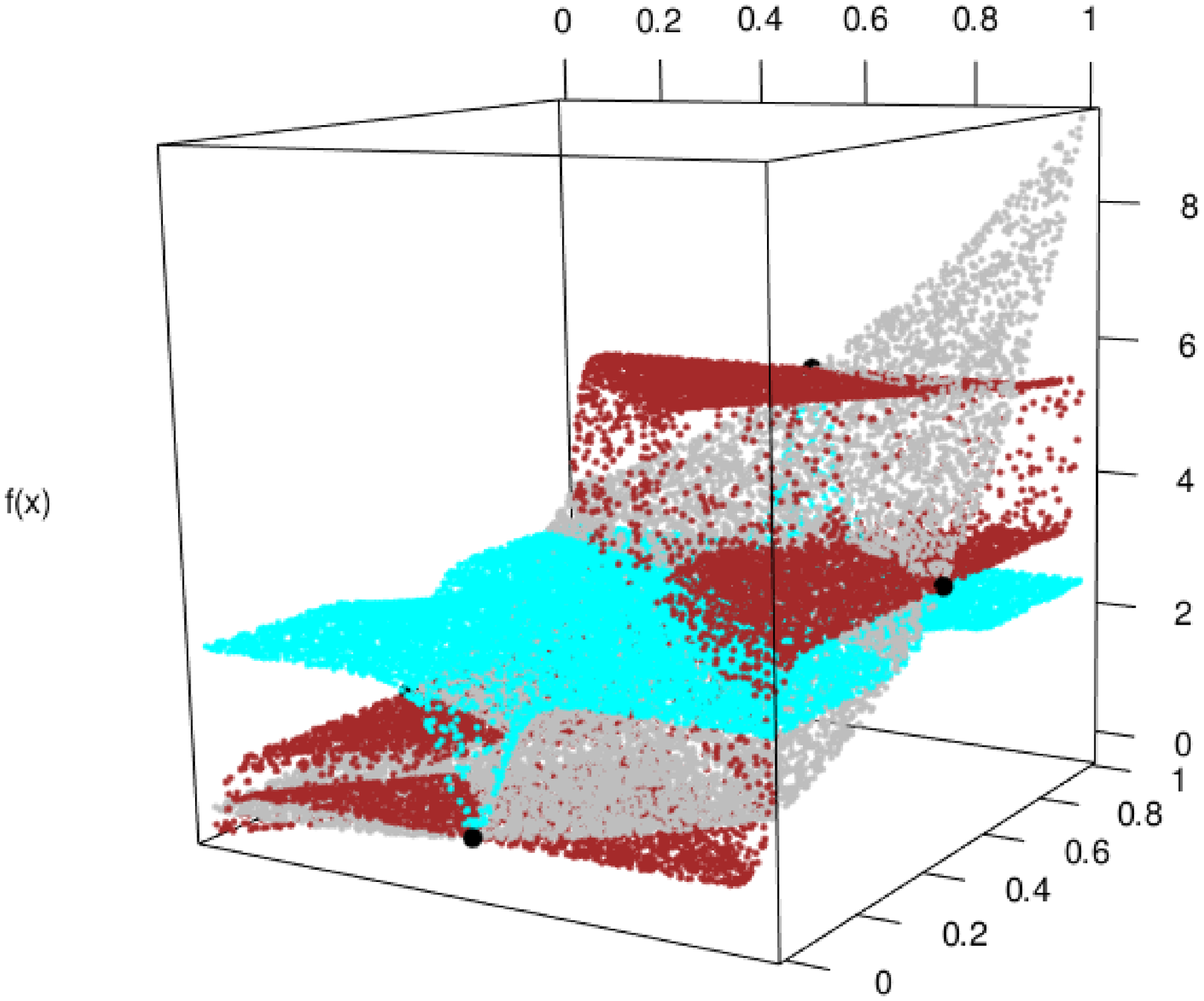}}
\caption{Illustration of the difference between Ordinary Kriging and SiNK. Gray points are the true function values, black points are the observed function values, cyan points are the Ordinary Kriging prediction, brown points are the SiNK prediction. }
\label{fig:zakharov}
\end{figure}

\subsection{Mean squared prediction error at extreme values}\label{subsec:sinkmse}

Since the Simple Kriging predictor is the BLUP, the SiNK predictor has larger MSPE than the Simple Kriging predictor. However, Propositon \ref{prop:mse} tells us that SiNK will be only slightly inferior; the ratio of MSPEs is bounded. 

\begin{proposition} \label{prop:mse}
\begin{align*}
\EE[(\hat{Y}_\mathrm{SiNK}(\mathbf{x}_0)-Y(\mathbf{x}_0))^2] = \frac{2}{1+\rho(\mathbf{x}_0)}  \EE[(\hat{Y}_\mathrm{K}(\mathbf{x}_0)-Y(\mathbf{x}_0))^2]
\end{align*}
That is, the RMSPE of SiNK is at most $\sqrt{2}$ times larger than the RMSPE of Kriging.
\end{proposition}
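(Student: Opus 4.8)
The plan is to compute both mean squared prediction errors explicitly as functions of the scalar $\rho := \rho(\mathbf{x}_0)$ and $k_0 := k(\mathbf{x}_0,\mathbf{x}_0)$, and then simply take the ratio. Write $\mathbf{z} := \mathbf{y} - \beta\mathbf{1}$, so that $\EE[\mathbf{z}] = 0$, $\mathrm{Var}[\mathbf{z}] = K$, $\EE[(Y(\mathbf{x}_0)-\beta)\mathbf{z}] = \mathbf{k}(\mathbf{x}_0)$, and $\mathrm{Var}[Y(\mathbf{x}_0)] = k_0$. Both predictors are of the form $\beta + a\,\mathbf{k}(\mathbf{x}_0)^T K^{-1}\mathbf{z}$, with $a = 1$ for Simple Kriging and $a = 1/\rho$ for SiNK, so it suffices to expand $\EE[(\beta + a\,\mathbf{k}(\mathbf{x}_0)^T K^{-1}\mathbf{z} - Y(\mathbf{x}_0))^2]$ as a quadratic in $a$.

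First I would record the three moment identities $\EE[(\mathbf{k}(\mathbf{x}_0)^T K^{-1}\mathbf{z})^2] = \mathbf{k}(\mathbf{x}_0)^T K^{-1}\mathbf{k}(\mathbf{x}_0) = \rho^2 k_0$, $\EE[(\mathbf{k}(\mathbf{x}_0)^T K^{-1}\mathbf{z})(Y(\mathbf{x}_0)-\beta)] = \mathbf{k}(\mathbf{x}_0)^T K^{-1}\mathbf{k}(\mathbf{x}_0) = \rho^2 k_0$, and $\EE[(Y(\mathbf{x}_0)-\beta)^2] = k_0$, each immediate from the covariance structure above together with the definition $\rho^2 = \mathbf{k}(\mathbf{x}_0)^T K^{-1}\mathbf{k}(\mathbf{x}_0)/k_0$. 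Substituting into the expanded square gives $\EE[(\hat Y(\mathbf{x}_0)-Y(\mathbf{x}_0))^2] = a^2\rho^2 k_0 - 2a\rho^2 k_0 + k_0$. Evaluating at $a = 1$ recovers the Simple Kriging MSPE $k_0(1-\rho^2) = k_0(1-\rho)(1+\rho)$, consistent with the already-stated formula $s^2 = k(\mathbf{x}_0,\mathbf{x}_0) - \mathbf{k}(\mathbf{x}_0)^T K^{-1}\mathbf{k}(\mathbf{x}_0)$; evaluating at $a = 1/\rho$ gives the SiNK MSPE $k_0 - 2\rho k_0 + k_0 = 2k_0(1-\rho)$. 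Dividing, the common factor $k_0(1-\rho)$ cancels and leaves $2/(1+\rho)$, which is the claimed identity. Since $\rho(\mathbf{x}_0) \in [0,1]$, the factor $2/(1+\rho(\mathbf{x}_0))$ lies in $[1,2]$, so the RMSPE inflation $\sqrt{2/(1+\rho(\mathbf{x}_0))}$ is at most $\sqrt 2$.

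There is no genuinely hard step here; the only things to watch are that the cross term $\EE[(\mathbf{k}(\mathbf{x}_0)^T K^{-1}\mathbf{z})(Y(\mathbf{x}_0)-\beta)]$ equals $\mathbf{k}(\mathbf{x}_0)^T K^{-1}\mathbf{k}(\mathbf{x}_0)$ (it comes from $\EE[\mathbf{z}(Y(\mathbf{x}_0)-\beta)] = \mathbf{k}(\mathbf{x}_0)$, not from $\mathrm{Var}[Y(\mathbf{x}_0)]$), and that the factorization $1-\rho^2 = (1-\rho)(1+\rho)$ is what makes the cancellation clean. Alternatively one can organize the computation via the orthogonal decomposition $Y(\mathbf{x}_0) = \hat Y_\mathrm{K}(\mathbf{x}_0) + \varepsilon$ with $\varepsilon$ uncorrelated with $\mathbf{z}$ and $\mathrm{Var}[\varepsilon] = s^2$: then $\hat Y_\mathrm{SiNK}(\mathbf{x}_0) - Y(\mathbf{x}_0) = (1/\rho - 1)(\hat Y_\mathrm{K}(\mathbf{x}_0) - \beta) - \varepsilon$, and using $\mathrm{Var}[\hat Y_\mathrm{K}(\mathbf{x}_0)-\beta] = \rho^2 k_0$ together with the orthogonality gives $\EE[(\hat Y_\mathrm{SiNK}(\mathbf{x}_0)-Y(\mathbf{x}_0))^2] = (1-\rho)^2 k_0 + k_0(1-\rho^2) = 2k_0(1-\rho)$, the same answer repackaged.
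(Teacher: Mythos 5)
Your proof is correct, and it reaches the identity by a genuinely different route from the paper. You expand the unconditional MSPE of the generic predictor $\beta + a\,\mathbf{k}(\mathbf{x}_0)^TK^{-1}(\mathbf{y}-\beta\mathbf{1})$ as a quadratic in $a$ and evaluate at $a=1$ and $a=1/\rho$; the paper instead first computes the \emph{conditional} MSPEs $\EE[(\hat{Y}(\mathbf{x}_0)-Y(\mathbf{x}_0))^2\mid Y(\mathbf{x}_0)=y_0]$ for both predictors using the conditional law of $\mathbf{y}$ given $Y(\mathbf{x}_0)=y_0$ from \eqref{eq:cl}, and then integrates over $y_0\sim N(\beta,k(\mathbf{x}_0,\mathbf{x}_0))$. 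Your computation is shorter and more elementary (it uses only second moments, not Gaussianity beyond the covariance structure), and your closing orthogonal-decomposition variant makes the bias--variance bookkeeping transparent. What the paper's detour buys is the pair of conditional MSPE formulas \eqref{eq:cmspe}, which are exactly the ingredients reused in the two subsequent propositions comparing SiNK and Kriging at extreme values of $y_0$; your argument does not produce those as a by-product. One cosmetic point: at $\rho=1$ your final step ``divide by the common factor $k_0(1-\rho)$'' is a division by zero, though the identity holds trivially there since both sides vanish; stating the conclusion as $2k_0(1-\rho)=\tfrac{2}{1+\rho}\,k_0(1-\rho^2)$ avoids even that quibble.
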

\begin{proof}
From the conditional distribution of $\mathbf{y}$ given $Y(\mathbf{x}_0) = y_0$ (\eqref{eq:cl}),
\begin{align}\label{eq:cmspe}
& \EE[\,(\hat{Y}_\mathrm{K}(\mathbf{x}_0)-Y(\mathbf{x}_0))^2 \, \big| \, Y(\mathbf{x}_0) = y_0 \,]  \notag \\
& =\mathbf{k}(\mathbf{x}_0)^T  K^{-1}  \mathbf{k}(\mathbf{x}_0) - \frac{(\mathbf{k}(\mathbf{x}_0)^T  K^{-1}  \mathbf{k}(\mathbf{x}_0) )^2}{k(\mathbf{x}_0,\mathbf{x}_0)} + (y_0-\beta)^2 \bigg(1- \frac{\mathbf{k}( \mathbf{x}_0)^T  K^{-1}  \mathbf{k}(\mathbf{x}_0) }{k(\mathbf{x}_0,\mathbf{x}_0)} \bigg) ^2 \; \mbox{and} \notag \\
& \EE[ \, (\hat{Y}_\mathrm{SiNK}(\mathbf{x}_0)-Y(\mathbf{x}_0))^2 \,\big| \, Y(\mathbf{x}_0) = y_0 \,]  \notag \\
& =k(\mathbf{x}_0,\mathbf{x}_0) - \mathbf{k}(\mathbf{x}_0)^T  K^{-1}  \mathbf{k}(\mathbf{x}_0) + (y_0-\beta)^2 \bigg(1- \sqrt{\frac{\mathbf{k}(\mathbf{x}_0)^T  K^{-1}  \mathbf{k}(\mathbf{x}_0) }{k(\mathbf{x}_0,\mathbf{x}_0)}} \bigg)^2
\end{align}
Now since $y_0 \sim N(\beta, k(\mathbf{x}_0, \mathbf{x}_0))$, 
\begin{align*}
\EE[(\hat{Y}_\mathrm{K}(\mathbf{x}_0)-Y(\mathbf{x}_0))^2  ] =k(\mathbf{x}_0,\mathbf{x}_0) - \mathbf{k}(\mathbf{x}_0)^T  K^{-1}  \mathbf{k}(\mathbf{x}_0)
\end{align*}
and finally
\begin{align*}
 \EE[(\hat{Y}_\mathrm{SiNK}(\mathbf{x}_0)-Y(\mathbf{x}_0))^2  ] &= 2k(\mathbf{x}_0,\mathbf{x}_0) - 2\sqrt{k(\mathbf{x}_0,\mathbf{x}_0)  \mathbf{k}(\mathbf{x}_0)^T  K^{-1}  \mathbf{k}(\mathbf{x}_0) }\\
&= \frac{2}{1+\rho(\mathbf{x}_0)}  \EE[(\hat{Y}_\mathrm{K}(\mathbf{x}_0)-Y(\mathbf{x}_0))^2  ] .
\end{align*}
by the definition of $\rho(\mathbf{x}_0)$.\qquad
\end{proof}

Here we show that SiNK has improved performance at extreme values. This can be represented in two ways; conditioning on a single extreme value of $Y(x_0)$ and conditioning on a region of extreme $Y(x_0)$ values. 

\begin{proposition}
If 
\begin{align*}
\bigg|\frac{y_0-\beta}{\sqrt{k(\mathbf{x}_0,\mathbf{x}_0)}} \bigg| \geq \sqrt{\frac{(1+\rho(\mathbf{x}_0))^2}{(1+\rho(\mathbf{x}_0))^2-1}}
\end{align*}
holds, then
      \begin{align*}
\EE[\,(\hat{Y}_\mathrm{SiNK}(\mathbf{x}_0)-Y(\mathbf{x}_0))^2 \, \big| \, Y(\mathbf{x}_0) = y_0 \,] \leq \EE[\,(\hat{Y}_\mathrm{K}(\mathbf{x}_0)-Y(\mathbf{x}_0))^2 \, \big| \, Y(\mathbf{x}_0) = y_0 \,] .
\end{align*}
\end{proposition}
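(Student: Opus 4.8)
The plan is to reuse the two conditional mean squared prediction error formulas already established in \eqref{eq:cmspe} during the proof of Proposition~\ref{prop:mse}, and to reduce the asserted inequality to a single algebraic identity in $\rho(\mathbf{x}_0)$. Write $\rho = \rho(\mathbf{x}_0)$ and $k_0 = k(\mathbf{x}_0,\mathbf{x}_0)$, and recall from the definition of $\rho$ that $\mathbf{k}(\mathbf{x}_0)^T K^{-1}\mathbf{k}(\mathbf{x}_0) = \rho^2 k_0$. Substituting this into \eqref{eq:cmspe}, the conditional MSPE of Simple Kriging becomes $\rho^2 k_0(1-\rho^2) + (y_0-\beta)^2(1-\rho^2)^2$ and that of SiNK becomes $k_0(1-\rho^2) + (y_0-\beta)^2(1-\rho)^2$.

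Next I would subtract the SiNK expression from the Kriging expression; the desired inequality
\[
\EE[(\hat{Y}_\mathrm{SiNK}(\mathbf{x}_0)-Y(\mathbf{x}_0))^2 \,\big|\, Y(\mathbf{x}_0)=y_0] \le \EE[(\hat{Y}_\mathrm{K}(\mathbf{x}_0)-Y(\mathbf{x}_0))^2 \,\big|\, Y(\mathbf{x}_0)=y_0]
\]
is then equivalent, after collecting terms, to
\[
k_0(1-\rho^2)^2 \le (y_0-\beta)^2\bigl[(1-\rho^2)^2 - (1-\rho)^2\bigr].
\]
I would factor the bracket as $(1-\rho)^2\rho(2+\rho)$, using $(1-\rho^2)-(1-\rho)=\rho(1-\rho)$ together with $(1-\rho^2)+(1-\rho)=(1-\rho)(2+\rho)$, and also write $(1-\rho^2)^2 = (1-\rho)^2(1+\rho)^2$. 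If $\rho=1$ both conditional MSPEs vanish and the claim is trivial; otherwise $0\le\rho<1$ (as noted after the definition of $\rho$), so I may divide through by $(1-\rho)^2>0$ to arrive at $k_0(1+\rho)^2 \le (y_0-\beta)^2\rho(2+\rho)$.

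Finally I would note the identity $(1+\rho)^2-1 = \rho(2+\rho)$, which shows that the last inequality is exactly $\bigl((y_0-\beta)/\sqrt{k_0}\bigr)^2 \ge (1+\rho)^2/\bigl((1+\rho)^2-1\bigr)$, i.e.\ the hypothesis of the proposition; this completes the argument. There is no serious obstacle here --- the proof is a short computation --- and the only points deserving care are the sign of $(1-\rho)^2$ when dividing (legitimate since $\rho\in[0,1]$, with $\rho=1$ handled separately and $\rho=0$ vacuous because the hypothesis is then unsatisfiable) and getting the factorization of $(1-\rho^2)^2-(1-\rho)^2$ right.
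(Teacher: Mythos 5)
Your proof is correct and takes the same route as the paper, which simply asserts that the result ``directly follows from \eqref{eq:cmspe}''; you have filled in the omitted algebra accurately, including the factorization $(1-\rho^2)^2-(1-\rho)^2=(1-\rho)^2\rho(2+\rho)$ and the identity $(1+\rho)^2-1=\rho(2+\rho)$ that turns the inequality into the stated hypothesis. The edge cases $\rho=1$ and $\rho=0$ are handled appropriately.
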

\begin{proof}
Directly follows from \eqref{eq:cmspe}. \qquad
\end{proof}

\begin{proposition}
Let $\phi(\cdot)$ and $\Phi(\cdot)$ be the density function and distribution function of the standard normal distribution respectively. Let  $Z(\mathbf{x}_0) = |(Y(\mathbf{x}_0)-\beta)/(\sqrt{k(\mathbf{x}_0,\mathbf{x}_0)})|$. For $M>0$, if $\rho(\mathbf{x}_0) \geq -1 + \sqrt{1+(1-\Phi(M))/(M\phi(M))}$, then
      \begin{align*}
\EE \left[(\hat{Y}_\mathrm{SiNK}(\mathbf{x}_0)-Y(\mathbf{x}_0))^2 \big| Z(\mathbf{x}_0) \geq M \right]
 \leq \EE \left[(\hat{Y}_\mathrm{K}(\mathbf{x}_0)-Y(\mathbf{x}_0))^2 \big| Z(\mathbf{x}_0) \geq M \right].
\end{align*}
\end{proposition}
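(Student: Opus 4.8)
The plan is to reduce the region-conditional comparison to a one-dimensional fact about a truncated second moment of a standard normal variable, piggybacking on the per-$y_0$ formulas already recorded in \eqref{eq:cmspe}.

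First I would subtract the two lines of \eqref{eq:cmspe}. Abbreviating $\rho=\rho(\mathbf{x}_0)$, $k=k(\mathbf{x}_0,\mathbf{x}_0)$, $Z=Z(\mathbf{x}_0)$ and using $\mathbf{k}(\mathbf{x}_0)^TK^{-1}\mathbf{k}(\mathbf{x}_0)=\rho^2 k$ together with $(y_0-\beta)^2=Z^2 k$, the difference of the two conditional MSPEs, given $Y(\mathbf{x}_0)=y_0$, should collapse after a short polynomial factoring (repeatedly using $1-\rho^2=(1-\rho)(1+\rho)$) to
\[
\EE[(\hat{Y}_\mathrm{SiNK}(\mathbf{x}_0)-Y(\mathbf{x}_0))^2\mid Y(\mathbf{x}_0)=y_0]-\EE[(\hat{Y}_\mathrm{K}(\mathbf{x}_0)-Y(\mathbf{x}_0))^2\mid Y(\mathbf{x}_0)=y_0]=k(1-\rho)^2\Big[(1+\rho)^2-\big((1+\rho)^2-1\big)Z^2\Big].
\]
Setting the bracket $\le 0$ recovers the previous proposition; integrating it is what is needed here.

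Next, since $\{Z(\mathbf{x}_0)\ge M\}$ is a function of $Y(\mathbf{x}_0)$ alone, the tower property lets me take $\EE[\,\cdot\mid Z\ge M]$ of the displayed identity, and because $k$ and $\rho$ are deterministic this leaves the single inequality $(1+\rho)^2-\big((1+\rho)^2-1\big)\EE[Z^2\mid Z\ge M]\le 0$, i.e., using $\rho\in(0,1)$ so that $(1+\rho)^2-1>0$, that $\EE[Z^2\mid Z\ge M]\ge (1+\rho)^2/((1+\rho)^2-1)$. Finally I would evaluate the truncated moment: $Z=|W|$ with $W\sim N(0,1)$, so integration by parts via $\phi'(w)=-w\phi(w)$ gives $\int_M^\infty w^2\phi(w)\,dw=M\phi(M)+1-\Phi(M)$ and hence $\EE[Z^2\mid Z\ge M]=1+M\phi(M)/(1-\Phi(M))$. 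Substituting and clearing denominators, the required inequality rearranges to $(1+\rho)^2-1\ge (1-\Phi(M))/(M\phi(M))$, which is precisely the hypothesis $\rho(\mathbf{x}_0)\ge -1+\sqrt{1+(1-\Phi(M))/(M\phi(M))}$.

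The step I expect to be the main (though not deep) obstacle is the first one: getting the subtraction of the two lines of \eqref{eq:cmspe} into the clean product form $k(1-\rho)^2[(1+\rho)^2-((1+\rho)^2-1)Z^2]$, since that is exactly what makes the prefactor $k(1-\rho)^2$ pull out of the region-conditional expectation and reduces everything to $\EE[Z^2\mid Z\ge M]$. It is also worth noting in passing that $\rho(\mathbf{x}_0)>0$ is needed both for SiNK to be defined and for the division above, and that $\rho(\mathbf{x}_0)=1$ is the trivial case where SiNK equals Kriging.
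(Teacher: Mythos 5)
Your proposal is correct and follows essentially the same route as the paper: both start from the conditional MSPE formulas in \eqref{eq:cmspe}, apply the truncated second moment $\EE[Z^2\mid Z\ge M]=1+M\phi(M)/(1-\Phi(M))$, and rearrange to obtain the stated threshold on $\rho(\mathbf{x}_0)$. Your factored form $k(1-\rho)^2\bigl[(1+\rho)^2-((1+\rho)^2-1)Z^2\bigr]$ for the difference is algebraically correct and merely makes explicit the computation the paper leaves implicit.
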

\begin{proof}
Let $\rho = \rho(\mathbf{x}_0)$. From \eqref{eq:cmspe}, 
\begin{align*}
&\EE[(\hat{Y}_\mathrm{K}(\mathbf{x}_0)-Y(\mathbf{x}_0))^2 \big| Z(\mathbf{x}_0)] = k(\mathbf{x}_0,\mathbf{x}_0) (\rho^2 - \rho^4 + Z(\mathbf{x}_0)^2(1-\rho^2)^2) \;\; \mbox{and}\\
&\EE[(\hat{Y}_\mathrm{SiNK}(\mathbf{x}_0)-Y(\mathbf{x}_0))^2 \big| Z(\mathbf{x}_0)] =  k(\mathbf{x}_0,\mathbf{x}_0) (1 - \rho^2 + Z(\mathbf{x}_0)^2(1-\rho)^2).
\end{align*}
Using
\begin{align*}
\EE[Z(\mathbf{x}_0)^2 \big| Z(\mathbf{x}_0)>M] = \frac{1}{1-\Phi(M)}\int_{M}^\infty z^2\phi(z) dz = \frac{M\phi(M) + 1-\Phi(M)}{1-\Phi(M)}
\end{align*}
we get the inequality for $\rho \geq -1 + \sqrt{1+(1-\Phi(M))/(M\phi(M))}$. \qquad
\end{proof}

Figure \ref{fig:rhoMall} shows the relation between $\rho(\mathbf{x}_0)$ and the critical $z$-score or the threshold $M$ for $z$-score. The ratio of the region-conditional mean squared prediction error 
\begin{align} \label{eq:cmspedef}
\frac{\mathrm{CMSPE}_{\mathrm{SiNK}}}{\mathrm{CMSPE}_{\mathrm{K}}}=\frac{\EE\left[(\hat{Y}_\mathrm{SiNK}(\mathbf{x}_0)-Y(\mathbf{x}_0))^2\big|Z(\mathbf{x}_0)\geq M \right]}{\EE\left[(\hat{Y}_\mathrm{K}(\mathbf{x}_0)-Y(\mathbf{x}_0))^2\big|Z(\mathbf{x}_0)\geq M \right]}
\end{align}
 decreases as the threshold $M$ increases. 

\begin{figure}
\centering
\subfigure[The level curve for the mean squared error. SiNK outperforms Simple Kriging in the region above and to the right of the given curves.] {\label{fig:rhoM}\includegraphics[width=60mm]{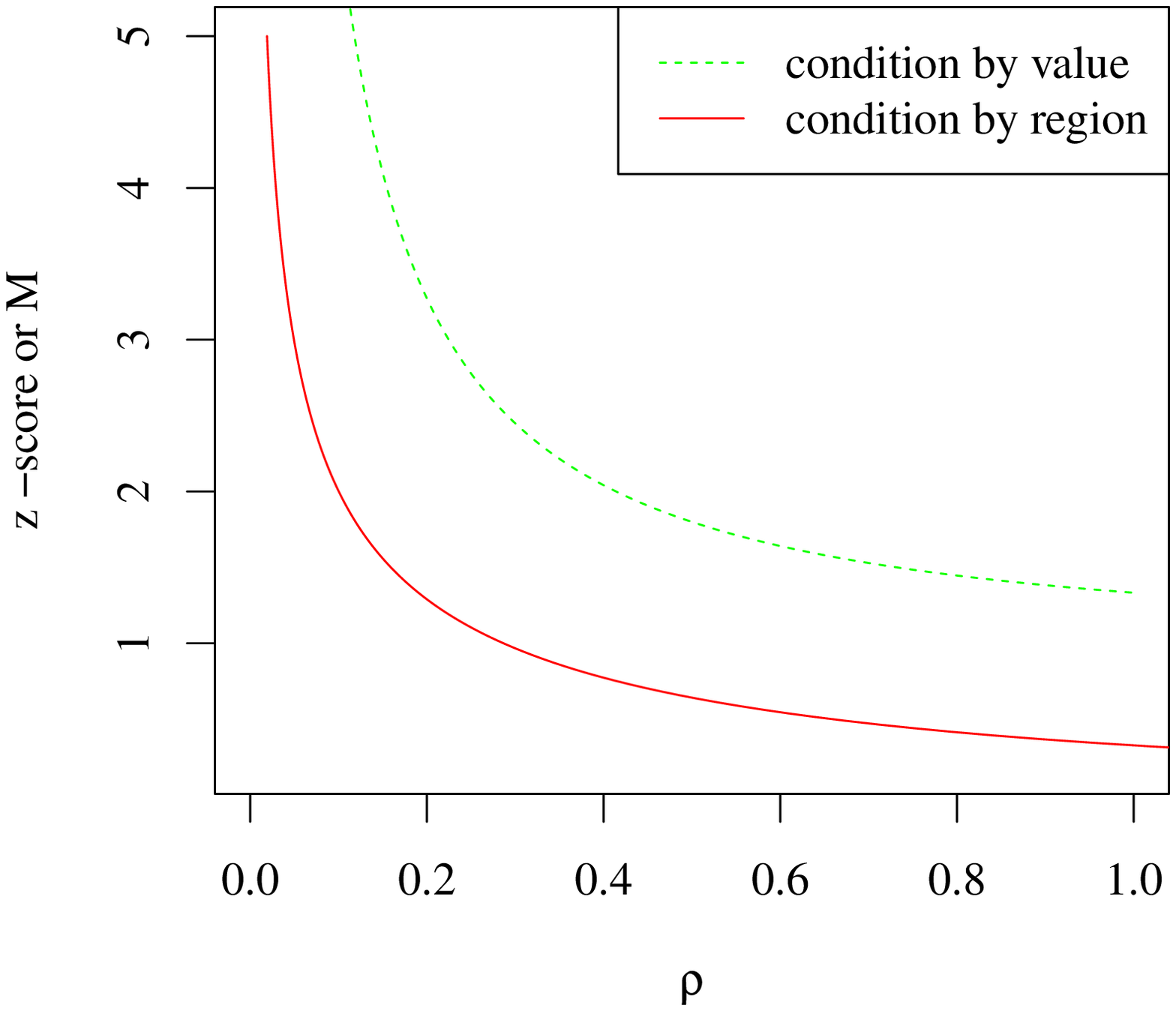}}
\subfigure[Contour plot of $\mathrm{CMSPE}_{\mathrm{SiNK}}/\mathrm{CMSPE}_{\mathrm{K}}$ (\ref{eq:cmspedef}). The ratio decreases as the threshold $M$ increases. ] {\label{fig:rhoMcontour}\includegraphics[width=60mm]{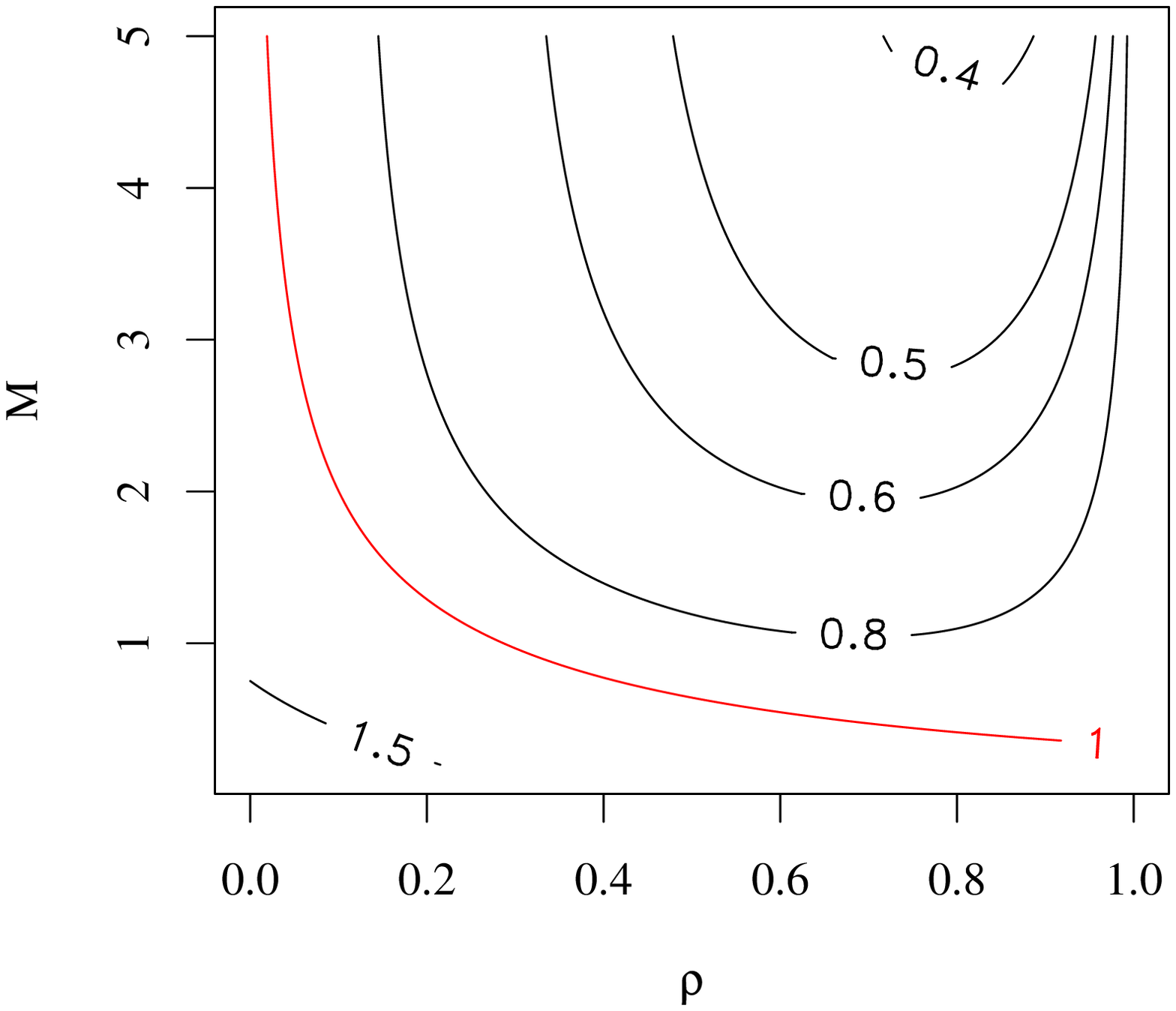}}
\caption{Relation between $\rho(\mathbf{x}_0)$ and $z$-score.}
\label{fig:rhoMall}
\end{figure}

\section{Numerical experiments} \label{sec:numexp}

For numerical simulations, we used the \textsf{DiceKriging} package in \textsf{R} by O. Roustant et al. \cite{roustant2012dicekriging}. We fit the constant mean model for Ordinary Kriging and SiNK, with the maximum likelihood estimator of the constant mean $\hat{\beta} = (\mathbf{1}^TK^{-1}\mathbf{1})^{-1}\mathbf{1}^T K^{-1}\mathbf{y}$. For the covariance function, we used tensor products of Mat\'ern $\nu=5/2$ kernels with maximum likelihood estimators of the length-scale parameters $\theta_1,\ldots,\theta_d$, unless specified otherwise. We used $\epsilon =10^{-3}$ in equation \eqref{eq:eps}.

To measure the performance of a predictor, we computed the empirical integrated squared error (EISE) 
\begin{align*}
\frac{1}{n_T}\sum\limits_{j=1}^{n_T} (\hat{Y}(\mathbf{x}_{test,j}) - Y(\mathbf{x}_{test,j}))^2.
\end{align*}
with an independent set of $n_T$ test points. 
\subsection{Gaussian process}
We generated a realization of a 7 dimensional Gaussian process with zero mean and Mat\'ern covariance with length-scale hyperparameters $\theta = (1,1,1,1,1,1,1)$ and stationary variance $k(\mathbf{x}, \mathbf{x})=\sigma^2 = 1$. The observations were 100 points i.i.d.\ uniform in $[0,1]^7$ and the test points were 2000 points i.i.d.\ uniform in $[0,1]^7$. 

To emulate the real world situation where the hyperparameters are unknown, we estimated the hyperparameters by maximizing the likelihood. The estimated mean was $\hat{\beta} = 0.143$, the estimated length-scale hyperparameters were $\hat\theta = (1.29, 0.92, 1.18, 1.41, 0.95, 0.76, 1.32) $, and the estimated stationary variance was $\hat\sigma^2 = 0.94$. The performance comparison between SiNK and Ordinary Kriging is in Table \ref{table:gppiston}. We observe that SiNK had slightly inferior EISE, but showed better performance at extreme values. 

\begin{center}
\begin{table}[h]
\footnotesize
\centering
\caption{Performance comparison of Ordinary Kriging and SiNK for a realization of Gaussian process and piston function. Extreme values are the function values with $|z$-score$|$ $>$ 2.}
\label{table:gppiston}
\begin{tabular}{c|cc}
\hline
Function                                             & Gaussian Process & Piston Function \\ \hline
Number of observations                               & 100              & 14              \\
$R^2$ Ordinary Kriging                                  & 0.818            & 0.674           \\
$R^2$ SiNK                                              & 0.814            & 0.711           \\
Overall EISE Ratio (SiNK/Ordinary)                   & 1.020            & 0.887           \\
Extreme values EISE Ratio (SiNK/Ordinary) & 0.820            & 0.814           \\ \hline
\end{tabular}
\end{table}
\end{center}

Figure \ref{fig:gp} shows the prediction at test points with extreme function values. We first sort the test points by the true function values and see the 1\% largest and smallest function values. We observe that SiNK reduces the conditional bias by inflating the residual term. Differences are small but consistently in the right direction. 

\begin{figure}
\centering
\subfigure[Prediction at test points with 1\% largest function values.]
 {\label{fig:gp_extremes_large}\includegraphics[width=60mm]{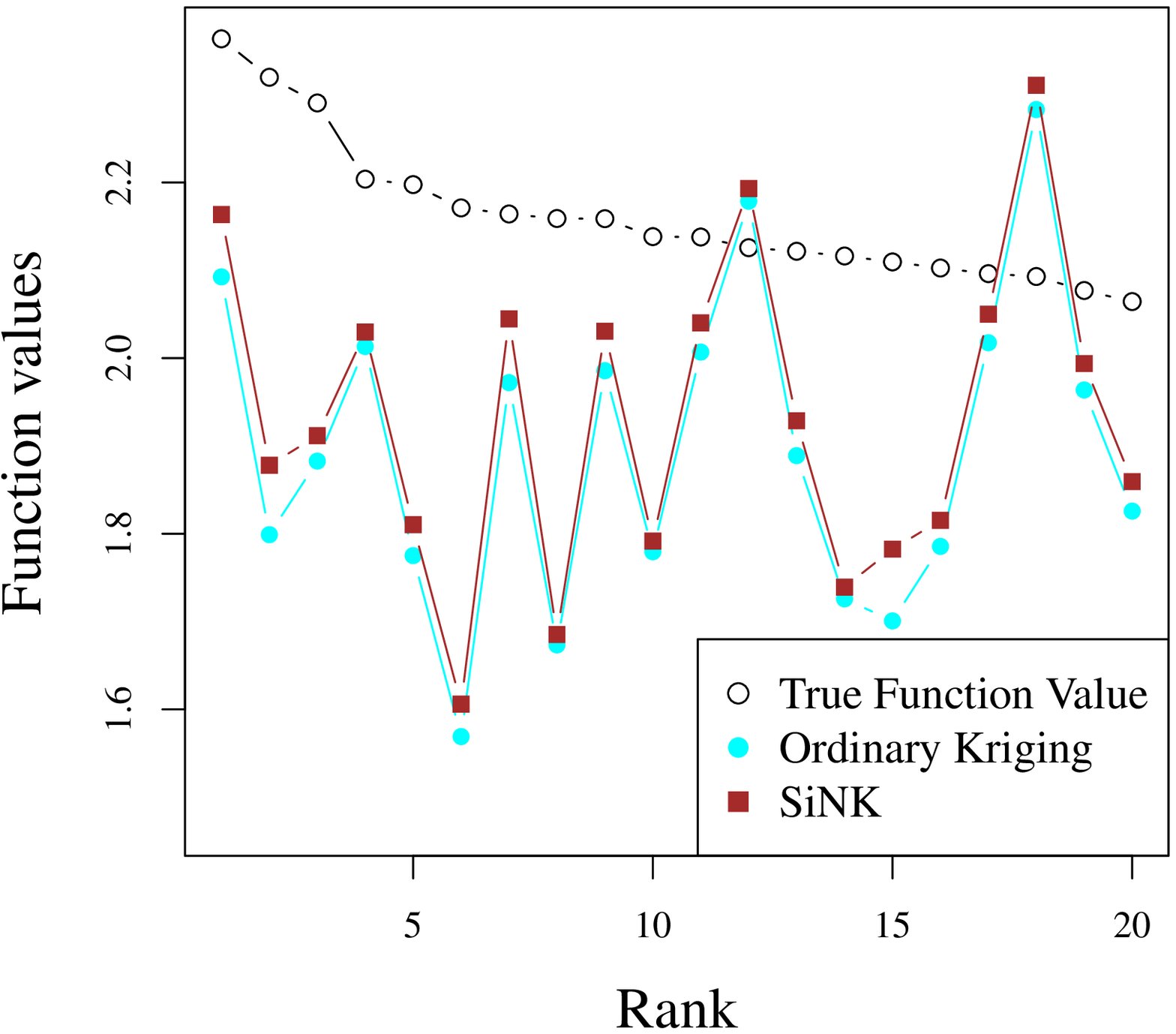}}
\subfigure[Prediction at test points with 1\% smallest function values.]
 {\label{fig:gp_extremes_small}\includegraphics[width=60mm]{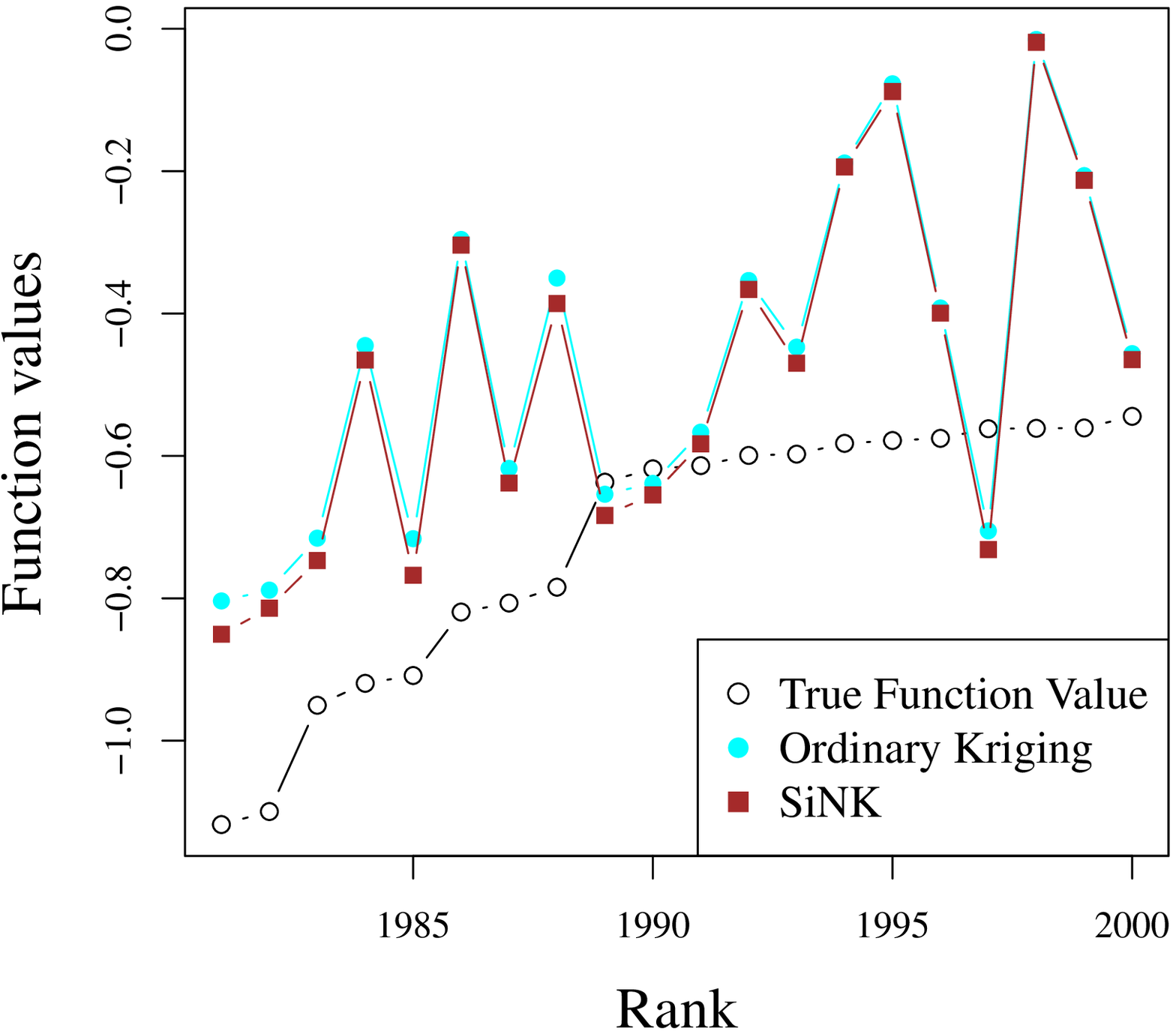}}
\caption{Ordinary Kriging and SiNK for a realization of 7-dimensional Gaussian process. Rank is the order of the true function values of the test points.}
\label{fig:gp}
\end{figure}

\subsection{Piston function}
We examined the performance of SiNK in a computer experiment; the piston simulation function. The piston simulation function in Zacks \cite{zacks1998modern} models the circular motion of a piston within a cylinder. The response $C$ is the time it takes to complete one cycle, in seconds. The formula of the function is
\[
C(\mathbf{x}) = 2\pi \sqrt { \frac{M}{k+ S^2 \frac{P_0 V_0 }{T_0} \frac{T_a}{V^2}}}
\]
where
\begin{align*}
V= \frac{S}{2k} \left( \sqrt{A^2 + 4k \frac{P_0 V_0}{T_0} T_a} - A \right) \;\mbox{and} ~ A = P_0 S + 19.62 M - \frac{kV_0}{S}.
\end{align*}
The description of the input variables is in Table \ref{table:piston}.
\begin{center}
\begin{table}[h]
\footnotesize
\centering
\caption{Input variables $\mathbf{x}$ for the piston function.}
\label{table:piston}
\begin{tabular}{l|l}
\hline
$M \in [30,60]$           & piston weight (kg)             \\
$S \in [0.005, 0.020]$    & piston surface area ($m^2$)    \\
$V_0 \in [0.002, 0.010]$  & initial gas volume ($m^3$)     \\
$k \in [1000, 5000]$      & spring coefficient ($N/m$)     \\
$P_0 \in [90000, 110000]$ & atmospheric pressure ($N/m^2$) \\
$T_a \in [290, 296]$      & ambient temperature (K)        \\
$T_0 \in [340, 360]$      & filling gas temperature (K)    \\ \hline
\end{tabular}
\end{table}
\end{center}

In computer experiments, the design of inputs is also very important, because each experiment is expensive, and a clever design could reduce the approximation error.  Here we adopted Randomized QMC design (Faure sequence base 7) for observations and test points. In Table \ref{table:gppiston}, we see that in this case SiNK performs better not only at extreme values but also overall. This result possibly comes from non-Gaussianity of piston function; more specifically, the reduction of conditional bias may have had a large effect in the test error in this case.

Again, in Figure \ref{fig:piston} the SiNK predictions are better at the test points with extreme function values than the Ordinary Kriging predictions, and the difference is significant at the test points with 1\% smallest function values. The inflation of the residual is consistently in the right direction, and larger than that of the Gaussian process example.

\begin{figure}
\centering
\subfigure[Prediction at the test points with 1\% largest function values.]
 {\label{fig:piston_extremes_large}\includegraphics[width=60mm]{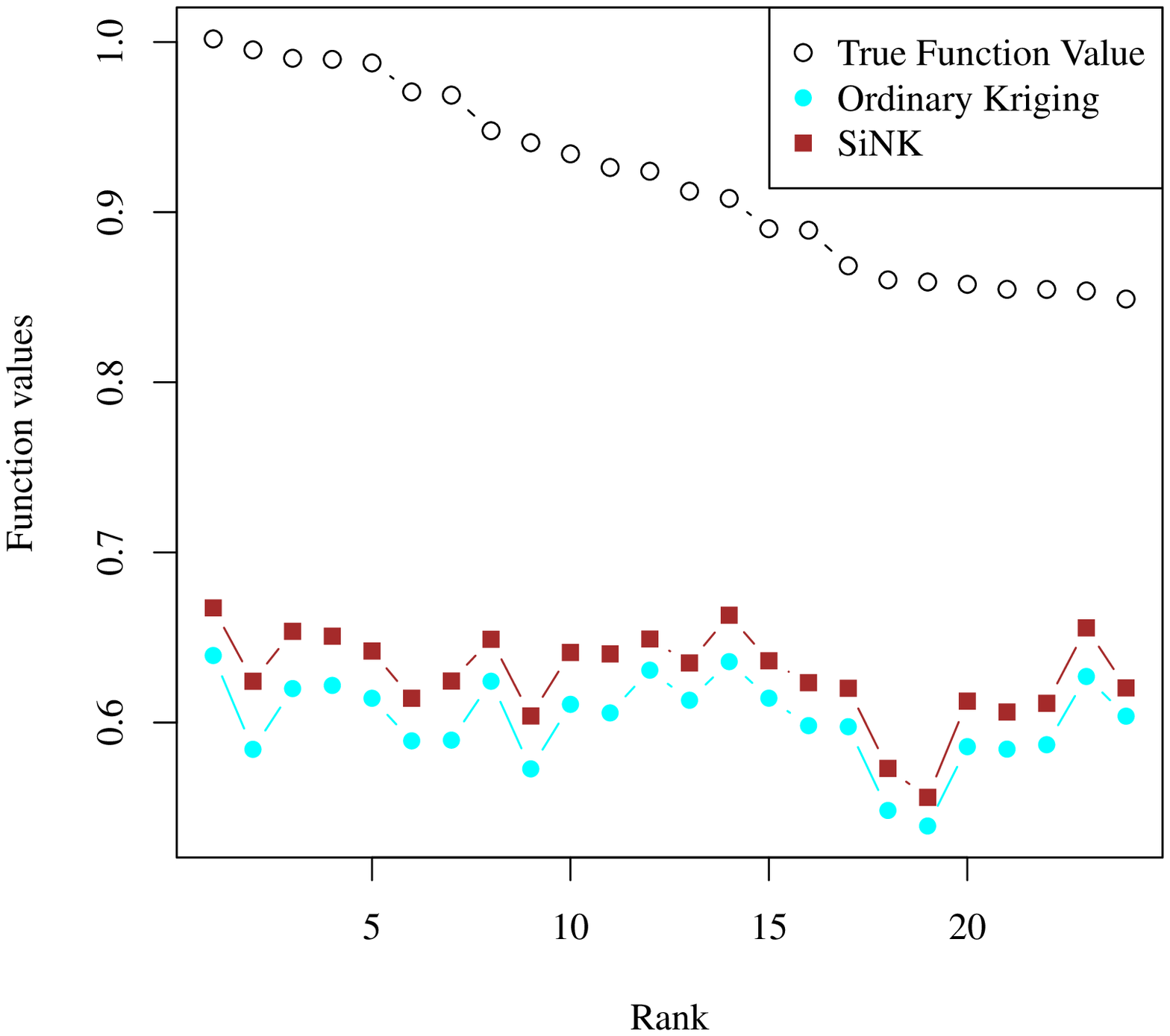}}
\subfigure[Prediction at the test points with 1\% smallest function values.]
 {\label{fig:piston_extremes_small}\includegraphics[width=60mm]{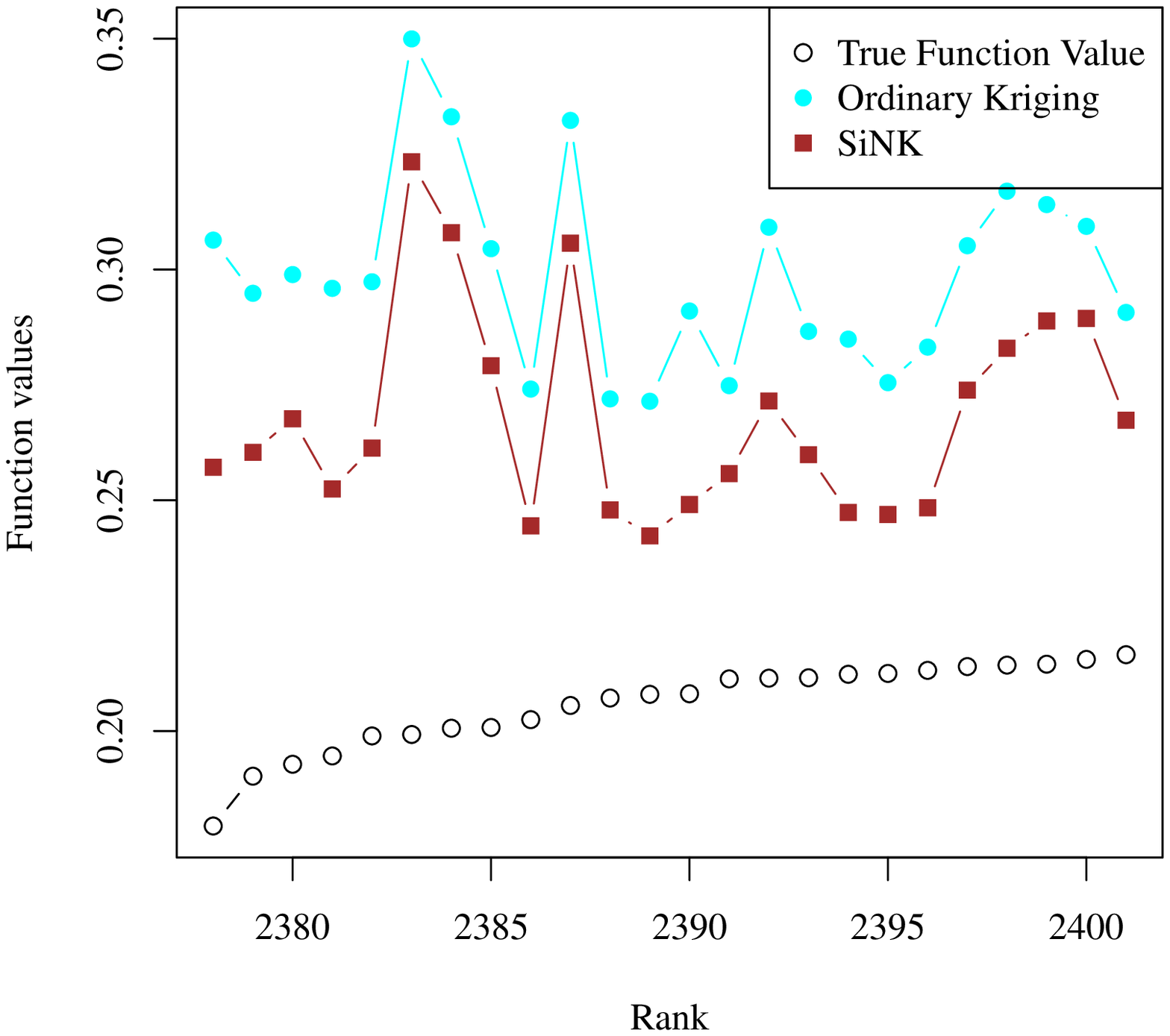}}
\caption{Ordinary Kriging and SiNK for the piston function. Rank is the order of the true function values of the test points.}
\label{fig:piston}
\end{figure}

\subsection{Other functions}
We fit Ordinary Kriging, Limit Kriging and SiNK for several deterministic functions and compared the performances. The test function codes are from Bingham's website (Bingham \cite{bingham2015}).
Table \ref{table:othersimulations} shows the dimension of the function, the number of observed points and test points, covariance type, $R^2$, overall EISE ratio, and EISE ratio at extreme values for each function. The training points and test points are independent and uniformly distributed in the domain of inputs. The number of training points for fitting each function was chosen so that the $R^2$ of Ordinary Kriging is roughly $0.95$, except for fitting the Robot Arm function which is a comparably difficult function to fit with our prediction methods. 

We see that for the 5 functions that we consider, SiNK performed better than Ordinary Kriging in terms of EISE, and the EISE ratios are even smaller for extreme values. Small $R^2$ gains are relevant because large $1-R^2$ improvements are captured by the EISE ratios. For instance, for the Welch function, the SiNK predictions at points with extreme function values (function values such that $|z\mbox{-score}|>2$) have roughly half EISE of the EISE of Ordinary Kriging predictions. In addition, we observe that the performance of Limit Kriging and SiNK is very similar in terms of overall EISE. Limit Kriging also shows improved performance at extreme values compared to Ordinary Kriging, but the improvement is smaller or no different than the improvement of SiNK. For the Friedman function, there was not a test point function value which had $|z$-score$|$ larger than 2. This was due to the large estimate of the stationary variance $\sigma^2 = k(\mathbf{x}, \mathbf{x})$. A suspicious estimate of the stationary variance can be found occasionally in practice, but it is not a problem for the prediction because all three predictors that we are comparing do not depend on the estimate of $\sigma^2$. 
 See appendix section \ref{subsec:testfunctions} for details of the functions used in Table \ref{table:othersimulations}. 

\begin{table}
\noindent\makebox[\textwidth]
{%
\begin{tabularx}{1.11\textwidth}{XX}
\caption{Performance comparison among Ordinary Kriging, Limit Kriging and SiNK. Mat\'ern covariance with $\nu=5/2$ and estimated length-scale parameters are used. \textnormal{NaN} is the case where no function values had $|z\mbox{-score}|>2$.}
\label{table:othersimulations}
\begin{scriptsize}
\setlength{\tabcolsep}{0.13cm}
\begin{tabular}{c|ccccc}
\hline
Function                                                        & Borehole     & Welch        & Piston       & Friedman     & Robot Arm    \\ \hline
Dimension                                                       & 8            & 20           & 7            & 5            & 8            \\
Number of training, test points                                 & 32, 5000     & 320, 5000    & 49, 5000     & 50, 5000     & 512, 5000    \\
$R^2$ (Ordinary Kriging)                                        & 0.934        & 0.948        & 0.962        & 0.967        & 0.854        \\
$R^2$  (Limit Kriging)                                          & 0.942        & 0.961        & 0.968        & 0.968        & 0.858        \\
$R^2$  (SiNK)                                                   & 0.946        & 0.961        & 0.967        & 0.968        & 0.855        \\
Overall EISE Ratio (Limit/Ordinary)                             & 0.884        & 0.744        & 0.843        & 0.977        & 0.970        \\
Overall EISE Ratio (SiNK/Ordinary)                              & 0.819        & 0.750        & 0.876        & 0.991        & 0.992        \\
\multicolumn{1}{l|}{Extreme values EISE Ratio (Limit/Ordinary)} & 0.876        & 0.630        & 0.828        & NaN          & 0.866        \\
Extreme values EISE Ratio (SiNK/Ordinary)                       & 0.803        & 0.489        & 0.834        & NaN          & 0.681        \\ \hline
\end{tabular}
\end{scriptsize}
\end{tabularx}}
\end{table}

\section{Discussion}
We have presented an alternative to Kriging with improved predictions at the extreme values. We first found a link between conditional likelihood at the target and CBPK, and used it to define SiNK. In addition, we showed that SiNK has a boundedness and a localness property. In numerical experiments, we observed that SiNK generally performs better not only at extreme values but also in terms of overall integrated squared error. This result is possibly due to the non-Gaussianity of the functions used in the examples.

\section*{Acknowledgements}
This work was supported by NSF grants DMS-1407397 and DMS-1521145.
\bibliographystyle{plain}
\bibliography{mybib}

\section*{Appendix} 
\appendix
\section{Derivation of the CMLE} \label{appendix:cmle}
For simplicity, let $\rho = \rho(\mathbf{x}_0) = \sqrt{\frac{ \mathbf{k}(\mathbf{x}_0)^T  K^{-1}  \mathbf{k}(\mathbf{x}_0) }{k(\mathbf{x}_0,\mathbf{x}_0)}}$. By the Woodbury formula,
\begin{align*}
\tilde{K}^{-1} &=  (K - \mathbf{k}(\mathbf{x}_0)  k(\mathbf{x}_0,\mathbf{x}_0)^{-1}  \mathbf{k}(\mathbf{x}_0)^T)^{-1} \\
&= K^{-1} +  \frac{K^{-1}  \mathbf{k}(\mathbf{x}_0)\mathbf{k}(\mathbf{x}_0)^TK^{-1}}{k(\mathbf{x}_0,\mathbf{x}_0) -\mathbf{k}(\mathbf{x}_0) ^T K^{-1} \mathbf{k}(\mathbf{x}_0)}.
\end{align*}
Therefore
\begin{align} \label{eq:woodbury}
\mathbf{k}(\mathbf{x}_0)^T\tilde{K}^{-1} = \mathbf{k}(\mathbf{x}_0)^TK^{-1} +  \frac{\mathbf{k}(\mathbf{x}_0)^TK^{-1}  \mathbf{k}(\mathbf{x}_0)\mathbf{k}(\mathbf{x}_0)^TK^{-1}}{k(\mathbf{x}_0,\mathbf{x}_0) -\mathbf{k}(\mathbf{x}_0) ^T K^{-1} \mathbf{k}(\mathbf{x}_0)} = \frac{1}{1-\rho^2} \mathbf{k}(\mathbf{x}_0)^TK^{-1}.
\end{align}
Thus, differentiating the conditional log likelihood \eqref{eq:condloglik} with respect to $y_0$, 
\begin{align*}
\frac{\partial  l(y_0)}{\partial y_0} &= \frac{1}{k(\mathbf{x}_0,\mathbf{x}_0)}(\mathbf{y} -\tilde{m} )^T \tilde{K}^{-1} \mathbf{k}(\mathbf{x}_0) =  \frac{1}{(1-\rho^2) k(\mathbf{x}_0,\mathbf{x}_0)}(\mathbf{y} -\tilde{m} )^T K^{-1} \mathbf{k}(\mathbf{x}_0) 
\end{align*}
from \eqref{eq:woodbury}. Solving $\partial  l(y_0)/\partial y_0 = 0$ leads to 
\begin{align*}
\hat{y_0} = \beta + \frac{1}{\rho^2 }   \mathbf{k}(\mathbf{x}_0)^T  K^{-1} (\mathbf{y} - \beta\mathbf{1}).
\end{align*}

\section{Generalization of CBPK and Remark \ref{rmk:sink}} \label{appendix:cbpk}
Without loss of generality, let $\beta = 0$. Expanding \eqref{eq:cbpkobj}, we get
 \begin{align*} 
& \EE[(y_0 - \lambda^T \mathbf{y})^2]+ \delta\EE[(y_0 - \EE[\lambda^T \mathbf{y}|y_0])^2]\\
 &=k(\mathbf{x}_0,\mathbf{x}_0) - 2\lambda^T \mathbf{k}(\mathbf{x}_0) + \lambda^T K \lambda + \delta \EE[(y_0 - \lambda^T \tilde{m})^2] \\
 &= k(\mathbf{x}_0,\mathbf{x}_0) - 2\lambda^T \mathbf{k}(\mathbf{x}_0) + \lambda^T K \lambda + \delta \bigg(1-\frac{\lambda^T \mathbf{k}(\mathbf{x}_0) }{k(\mathbf{x}_0,\mathbf{x}_0)} \bigg)^2k(\mathbf{x}_0,\mathbf{x}_0).
\end{align*}
This is a quadratic form of $\lambda$, and the minimizing $\lambda$ can be computed as in \eqref{eq:woodbury} by the Woodbury formula. We get
 \begin{align*} 
\hat{\lambda} &= \bigg(K+ \frac{\delta}{k(\mathbf{x}_0,\mathbf{x}_0)}\mathbf{k}(\mathbf{x}_0)\mathbf{k}(\mathbf{x}_0)^T \bigg)^{-1}(\mathbf{k}(\mathbf{x}_0)+\delta \mathbf{k}(\mathbf{x}_0)) \\
&= \frac{\delta+1}{\delta\rho^2+1} K^{-1} \mathbf{k}(\mathbf{x}_0).
\end{align*}
For $\delta \geq 0$, $w(\mathbf{x}_0) = (\delta+1)/(\delta\rho^2+1) \in [1, 1/\rho^2)$, and $\lim\limits_{\delta \rightarrow \infty} w(\mathbf{x}_0) = 1/\rho^2$. For $\delta =1/\rho$, $w(\mathbf{x}_0) = 1/\rho$ which produces the SiNK predictor. 

\section{Definition of SiNK} \label{appendix:sink}
The logarithm of the posterior probability (up to a constant) is 
\begin{align*}
\log p(y_0|\mathbf{y}) &= -\frac{1}{2}(\mathbf{y} -\tilde{m} )^T \tilde{K}^{-1} (\mathbf{y} -\tilde{m} ) - \frac{\rho}{2(1+\rho)}  \frac{(y_0 - \beta)^2}{k(\mathbf{x}_0, \mathbf{x}_0)}  
\end{align*}
Differentiating with respect to $y_0$, we get
\begin{align*}
\frac{\partial  \log p(y_0|\mathbf{y})}{\partial y_0} &= \frac{1}{k(\mathbf{x}_0,\mathbf{x}_0)}(\mathbf{y} -\tilde{m} )^T \tilde{K}^{-1} \mathbf{k}(\mathbf{x}_0)  - \frac{\rho}{1+\rho}  \frac{(y_0 - \beta)}{k(\mathbf{x}_0, \mathbf{x}_0)}  \\
&=\frac{1}{(1-\rho^2) k(\mathbf{x}_0,\mathbf{x}_0)}(\mathbf{y} -\tilde{m} )^T K^{-1} \mathbf{k}(\mathbf{x}_0)  - \frac{\rho}{1+\rho}  \frac{(y_0 - \beta)}{k(\mathbf{x}_0, \mathbf{x}_0)} 
\end{align*}
from \eqref{eq:woodbury}. Solving $\partial  \log p(y_0|\mathbf{y}) / \partial y_0= 0$ leads to 
\begin{align*}
\hat{y_0} = \beta + \frac{1}{\rho }   \mathbf{k}(\mathbf{x}_0)^T  K^{-1} (\mathbf{y} - \beta\mathbf{1}).
\end{align*}

\section{Proof of Theorem \ref{thm:localness} and Proposition \ref{property:localness}} \label{appendix:thm1}
\begin{proof}
Let the stationary variance $K(\mathbf{x}, \mathbf{x}) = \sigma^2$. Now for a target point $\mathbf{x}_0 \in B (\mathbf{x}_j) \cap J_k$, for $l \neq j$,
   \begin{align*}
\lim_{\theta_k \rightarrow 0} \frac{K(\mathbf{x}_0, \mathbf{x}_l)}{K(\mathbf{x}_0, \mathbf{x}_j)}  = \lim_{\theta_k \rightarrow 0}  \prod_{i=1}^d \frac{C_{\theta_i}(|(\mathbf{x}_l - \mathbf{x}_0)_i|)}{C_{\theta_i}(|(\mathbf{x}_j - \mathbf{x}_0)_i|)} = \lim_{\theta_k \rightarrow 0}  \prod_{i=1}^d \frac{C_1\left(\frac{|(\mathbf{x}_l - \mathbf{x}_0)_i|}{\theta_i}\right)}{C_{1}\left(\frac{|(\mathbf{x}_j - \mathbf{x}_0)_i|}{\theta_i}\right)} = 0
\end{align*}
Thus we obtain
\begin{align*}
\lim_{\theta_k \rightarrow 0} \frac{1}{K(\mathbf{x}_0, \mathbf{x}_j)} \mathbf{k}(\mathbf{x}_0) = \mathbf{e}_j
\end{align*}
where $\mathbf{e}_j$ is the $j$\nobreakdash-th unit vector. Noting that $\mathbf{x}_j \in B (\mathbf{x}_j) \cap J_k$, we have 
 \begin{align*}
\lim_{\theta_k \rightarrow 0} \frac{1}{\sigma^2}K = I_n
\end{align*}
 where $I_n$ is the $n \times n$ identity matrix. Thus,
 \begin{align*}
\lim_{\theta_k \rightarrow 0} \frac{\rho^2}{K(\mathbf{x}_0,\mathbf{x}_j)^2} &=\lim_{\theta_k \rightarrow 0} \frac{\mathbf{k}(\mathbf{x}_0)^T K^{-1}\mathbf{k}(\mathbf{x}_0)}{\sigma^2K(\mathbf{x}_0,\mathbf{x}_j)^2} = \frac{1}{\sigma^4} \;\;\mbox{and}\\
\lim_{\theta_k \rightarrow 0}\frac{\sigma^2\mathbf{k}(\mathbf{x}_0)^T K^{-1} (\mathbf{y} - \beta \mathbf{1})}{K(\mathbf{x}_0, \mathbf{x}_j)} &= y_j - \beta. 
\end{align*}
Now note that
 \begin{align*}
\hat{Y}(\mathbf{x}_0) &= \beta + w(\rho)  \mathbf{k}(\mathbf{x}_0)^T K^{-1} (\mathbf{y} - \beta \mathbf{1}) \\
&= \beta + w(\rho)  \rho \; \frac{K(\mathbf{x}_0,\mathbf{x}_j)}{\rho \sigma^2} \frac{\sigma^2\mathbf{k}(\mathbf{x}_0)^T K^{-1} (\mathbf{y} - \beta \mathbf{1})}{K(\mathbf{x}_0, \mathbf{x}_j)} .
\end{align*}
Thus, to satisfy \eqref{eq:localinlimit},
 \begin{align} \label{eq:localcond}
\lim_{\theta_k \rightarrow 0} w(\rho)  \rho =1
\end{align}
is the condition that needs to hold. For the SiNK predictor, $w(\rho) = 1/ \rho$, so the condition holds, and therefore SiNK has the localness property and Proposition \ref{property:localness} holds. 

 The limit range of $\rho$ as $\theta_k \rightarrow 0$ needs to be determined. Note that for fixed $\mathbf{x}_0 \in B(\mathbf{x}_j) \cap J_k$, $\rho \rightarrow 0$ as $\theta_k \rightarrow 0$. Now for any $\delta \in (0,1]$, let $\epsilon = C_1^{-1}(\delta)$ and $\mathbf{x}_0 = \mathbf{x}_j + \epsilon  \theta_k \mathbf{e}_k$. For all sufficiently small and positive $\theta_k$, we have $\mathbf{x}_0 \in B(\mathbf{x}_j) \cap J_k$. Then
 \begin{align*}
\lim_{\theta_k \rightarrow 0} \frac{K(\mathbf{x}_0, \mathbf{x}_j)}{\sigma^2} = \lim_{\theta_k \rightarrow 0}  \prod_{i=1}^d C_{\theta_i}((\mathbf{x}_j - \mathbf{x}_0)_i) = \lim_{\theta_k \rightarrow 0} C_{\theta_k}(\epsilon \theta_k ) = C_{1}(\epsilon) = \delta
\end{align*}
 Thus, $\lim\limits_{\theta_k \rightarrow 0}  \rho =  \delta$ for our selection of $\mathbf{x}_0$. For \eqref{eq:localcond} to hold, since $w$ is a continuous function of $\rho$, $w(\delta)  \delta = 1$ must hold for all $\delta \in (0,1]$. To put it differently, if \eqref{eq:localinlimit} holds, then it is the SiNK predictor. \qquad
\end{proof}

\section{Test Functions} \label{subsec:testfunctions}
\subsection{Borehole Function} (Morris et al. \cite{morris1993bayesian})
\begin{align*}
f(\mathbf{x}) = \frac{2\pi T_u (H_u - H_l)}{\log(r/r_w) \left(1.5 + \frac{2LT_u}{\log(r/r_w)r_w^2 K_w}+\frac{T_u}{T_l}\right)}
\end{align*}
The ranges of the eight variables are $r_w$ : (0.05, 0.15), $r$ = (100, 50000), $T_u$ = (63070, 115600), $H_u$ = (990, 1110), $T_l$ = (63.1, 116), $H_l$ = (700, 820), $L$ = (1120, 1680), and $K_w$ = (9855,12045).
\subsection{Welch} (Welch et al. \cite{welch1992screening})
\begin{align*}
f(\mathbf{x}) &= \frac{5x_{12}}{1+x_1} + 5(x_4 - x_{20})^2 + x_5 + 40x_{19}^3 - 5x_{19}\\ &+ 0.05x_2 + 0.08x_3 - 0.03x_6 + 0.03x_7 - 0.09x_9 - 0.01x_{10} - 0.07x_{11}+ 0.25x_{13}^2 \\& -0.04x_{14} + 0.06x_{15} - 0.01x_{17} - 0.03x_{18}, ~ \mathbf{x} \in [-0.5,0.5]^{20}.
\end{align*}
\subsection{Friedman} (Friedman et al. \cite{friedman1983multidimensional})
\begin{align*}
f(\mathbf{x}) = 10\sin(\pi x_1 x_2) + 20(x_3 - 0.5)^2 +10x_4 + 5x_5, ~ \mathbf{x} \in [0,1]^{5}.
\end{align*}
\subsection{Robot Arm} (An and Owen \cite{an2001quasi})
\begin{align*}
f(\mathbf{x}) &= (u^2 + v^2) ^{0.5}, \mbox{where}\\
u &= \sum\limits_{i=1}^4 L_i \cos \Big( \sum\limits_{j=1}^i \theta_j \Big),\\
v &= \sum\limits_{i=1}^4 L_i \sin \Big( \sum\limits_{j=1}^i \theta_j \Big),\\
\mathbf{x} &= (\theta_1,\ldots,\theta_4, L_1,\ldots,L_4) \in [0,2\pi]^4 \times [0,1]^4.
\end{align*}

\end{document}